\title{Scalable Learning of One-Counter Automata via State-Merging Algorithms}
 \author{Shibashis Guha}{School of Technology and Computer Science, Tata Institute of Fundamental Research, India \and \url{https://www.tifr.res.in/shibashis.guha/}}{shibashis@tifr.res.in}{https://orcid.org/0000-0002-9814-6651}{} \author{Anirban Majumdar}{Independent Researcher \and \url{https://anirban11.github.io/}}{majumdaranirban963@gmail.com}{https://orcid.org/0000-0003-4793-1892}{}
 \author{Prince Mathew}{School of Mathematics and Computer Science, Indian Institute of Technology Goa, India \and \url{https://princemathew07.github.io/} }{prince@iitgoa.ac.in}{https://orcid.org/0000-0001-6410-1474}{}
  \author{A.V. Sreejith}{School of Mathematics and Computer Science, Indian Institute of Technology Goa, India \and \url{https://www.iitgoa.ac.in/~sreejithav/} }{sreejithav@iitgoa.ac.in}{}{}
 \authorrunning{S. Guha, A. Majumdar, P. Mathew, A.V. Sreejith} 
\keywords{Active Learning, Passive Learning, One-Counter Automata, RPNI.} 
\begin{document}

\maketitle

\begin{abstract}
We propose One-counter Positive Negative Inference (\OPNI), a passive learning algorithm for deterministic real-time one-counter automata (\DROCA). Inspired by the \RPNI algorithm for regular languages, \OPNI constructs a \DROCA consistent with any given valid sample set.

We further present a method for combining \OPNI with active learning of \DROCA, and provide an implementation of the approach. Our experimental results demonstrate that this approach scales more effectively than existing state-of-the-art algorithms. We also evaluate the performance of the proposed approach for learning visibly one-counter automata. 

\end{abstract}

\section{Introduction}
\label{sec:intro}
{\bf Automata learning and verification.}
Automata learning constitutes a correct-by-construction synthesis technique aimed at inferring formal models—such as finite-state machines or Mealy machines—from observed system behaviour. It ensures that the inferred model remains consistent with the observed data. Closely linked to formal verification, automata learning plays a key role in model inference, facilitating formal analysis and validation of system behaviour. It may also be viewed as an inductive synthesis framework tailored for the construction of finite-state programs.

\noindent{\bf Active and passive learning of automata.}
The $L^*$ algorithm~\cite{Angluin87} of Angluin is a foundational algorithm in active automata learning. It learns a minimal deterministic finite automaton (\dfa) that accepts an unknown regular language, using a query-based learning model. The learner interacts with a teacher (oracle) through two types of queries, called membership queries and equivalence queries.
The $L^*$ algorithm guarantees the learning of the minimal \dfa in a number of steps polynomial in the size of the minimal \dfa and the length of the longest counterexample.

Passive learning of \dfa, for example, Regular Positive and Negative Inference (\RPNI)~\cite{oncina1992}, on the other hand, involves inferring a \dfa from a fixed, finite dataset consisting of positive and negative examples of strings.
The output of the algorithm is a \dfa that accepts all positive examples and rejects all negative ones.
The constructed \dfa may depend on the input positive and negative examples. For arbitrary inputs, the \dfa that is consistent with the input data may not be minimal.

\noindent{\bf One-counter automata and visibly one-counter automata.}
Deterministic one-counter automata~\cite{VP75} are a subclass of deterministic pushdown automata that operate with a single integer counter, which can be incremented, decremented, or tested for zero during transitions. They serve as a simple model for programs with minimal memory—more powerful than finite automata, but less expressive than general deterministic pushdown automata.
Deterministic real-time one-counter automata (\droca)~\cite{droca} form a subclass of deterministic one-counter automata that do not have $\varepsilon$-transitions.

Visibly pushdown automata (\textsc{vpa})~\cite{AM04,AM09} are a restricted subclass of pushdown automata where the type of stack operation (push, pop, or no operation) is determined by the input symbol. This restriction allows \textsc{vpa} to retain much of the expressive power of pushdown automata while ensuring desirable closure properties and algorithmic properties similar to finite automata.
Unlike pushdown automata, deterministic visibly pushdown automata are equally expressive as nondeterministic ones.
Deterministic visibly one-counter automata (\voca) are both \droca and deterministic \textsc{vpa}.

In this paper, we study active and passive learning of \droca, and extend them for \voca. 

\noindent{\bf Contributions.}
We present two new methods for learning \droca, addressing both passive and active learning settings:
\begin{itemize}
\item Our first contribution (Section \ref{sec:opni}) is a passive learning algorithm, called \opni. Given a labelled set $\S$ of words—partitioned into accepting and rejecting examples—along with counter values for all prefixes, \opni synthesises a \droca\ that is consistent with both the counter information and the acceptance labels.
\item Our second contribution (Section \ref{sec:active-passive}) is an active learning procedure, \opniL.  In this setting, the learner interacts with a teacher who possesses a target \droca. The learner may issue membership, equivalence, and counter-value queries to infer the target automaton. \opniL adapts the \minOCA framework~\cite{MathewPS25} by replacing its SAT-based subroutine with our passive learner \opni, leading to a simpler and significantly more scalable method.
\end{itemize}

We implemented \opniL in Python and evaluated it on randomly generated \droca. Our experiments (Section \ref{sec:experiments}) show that \opniL\ outperforms the state of the art \minOCA algorithm in terms of scalability: while \minOCA fails on most \droca\ with greater than 12 states, \opniL successfully learns \droca\ of much larger size.
Additionally, we evaluated \opniL for the class of \voca. 
In this case, \opniL demonstrates remarkable scalability, learning most \voca\ with up to $60$ states and nearly half with $100$ states. In contrast, \minOCA fails to learn most \voca\ with greater than 20 states.

However, unlike \minOCA, \opniL\ need not necessarily learn the minimal \droca. 
    
\noindent{\bf Related works.}
In the passive learning framework, the learner receives a set of labelled examples and synthesises a model consistent with them. The classical \RPNI algorithm~\cite{oncina1992} learns a \dfa in polynomial time from a set of accepting and rejecting words.

Active learning of finite automata via membership and equivalence queries was pioneered by Angluin's $L^*$ algorithm~\cite{Angluin87}.  
In the context of learning one-counter automata, Fahmy and Roos~\cite{FahmyR95} established the decidability of learning \droca. Later, Neider and L\"oding \cite{christof}  proposed an algorithm for learning \voca, employing an additional partial-equivalence query. Building on Neider and L\"oding's techniques, Bruy\'ere et al. \cite{bps} proposed learning \droca\ by combining counter-value queries—which return the counter value after processing a word. Mathew et al. \cite{MathewPS25} proposed an alternate algorithm (\minOCA) to learn \droca.
It employs SAT solvers to compute a minimal separating \dfa. These algorithms have worst-case exponential runtime. Recent work~\cite{LearningInP} shows that active learning of \droca is possible in polynomial time.

\noindent {\bf Organisation.} The remainder of the paper is organised as follows. Section~\ref{sec:prelims} presents the necessary preliminaries. In Section~\ref{sec:opni}, we introduce our passive learning algorithm, \opni. Section~\ref{sec:active-passive} describes the active learning procedure, \opniL. Section~\ref{sec:experiments} reports on our experimental evaluation of \opniL and compares its performance with the state-of-the-art \minOCA algorithm. Finally, Section~\ref{sec:conclusion} concludes the paper.

\section{Preliminaries}
\label{sec:prelims}
For {a} finite set $S$, {we denote by} $|S|$ the cardinality {of $S$}. Non-negative integers are denoted by $\N$, and 
$[i,j]$ denotes the interval $\{i, i+1, \ldots , j\} \subseteq \N$. 
The sign of a non-negative integer $d$ (denoted by $\sgn(d)$) is $0$ if $d=0$ and $1$ otherwise. 

An alphabet denoted by $\Sigma$ {is a finite set of letters} and $\Sigma^*$ (resp., $\Sigma^+$) represents the set of all words including (resp., excluding) the empty word $\varepsilon$ over the alphabet $\Sigma$. For a word $w=a_0a_1a_2\ldots a_n$ and non-negative integers $i < j$, we use $w[i \cdots j]$ for the factor {$a_i a_{i+1} \cdots a_j$} and $w[i]$ for the letter $a_i$. 
Given a set {$S \subseteq \Sigma^*$} of words, we write $\pref{S}$ to denote the set of prefixes of all words in $S$. 

We now define the \emph{length-lexicographic order} (denoted by $\llex$) on words. First, we fix an arbitrary total order $\lex$ on the letters in $\Sigma$. The order of words is inductively defined as follows:
\(
u \llex v \quad \text{if either} \quad |u| < |v|, \quad \text{or} \quad |u| = |v| \text{ and there exists } x, y, z \in \Sigma^* \text{ and } \sigma_1, \sigma_2 \in \Sigma \text{ such that } u = x \sigma_1 y,\ v = x\sigma_2 z \text{ and }  \sigma_1 \lex \sigma_2.
\)
Note that this ordering is total and well-founded. Note also that this order can naturally be extended to pairs of words as follows: \( (u_1,v_1) \llex (u_2, v_2) \) if either \( u_1 \llex u_2 \), or \( u_1 = u_2\) and \( v_1 \llex v_2 \).

    A \emph{deterministic finite automaton} (\dfa) is a tuple \( D = (Q, \Sigma, \delta, q_0, F) \), where
    \( Q \) is a finite set of states,
    \( \Sigma \) is a finite input alphabet,
    \( \delta \colon Q \times \Sigma \to Q \) is the transition function,
    \( q_0 \in Q \) is the initial state, and
    \( F \subseteq Q \) is the set of accepting (final) states.

Given a \dfa $D$, we sometimes write $q\xrightarrow{a}q'$ to denote $\delta(q,a) = q'$. The transition function \( \delta \) extends naturally to words in \( \Sigma^* \) in the usual way: for \( w = a_1 a_2 \cdots a_n \in \Sigma^* \), we define \( \delta(q, w) = \delta(\cdots\delta(\delta(q, a_1), a_2), \dots, a_n) \). We will write \( D(w) \) to denote \( \delta(q_0,w) \). A word \( w \in \Sigma^* \) is \emph{accepted} by \( D \) if \( D(w) \in F \). The language recognised by \( D \), denoted as \( \Lang(D) \), is the set of all accepted words. 
Two \dfas $D_1$ and $D_2$ are equivalent if $\Lang(D_1)=\Lang(D_2)$.

\begin{definition}[\droca]
\label{def:droca}
    A \emph{deterministic real-time one-counter automaton} (\droca) is a tuple $\Autom= (Q,\Sigma, q_0,\delta_0,\delta_1,F)$, where $Q$ is a finite nonempty set of states, $\Sigma$ is the input alphabet, $q_0\in Q$ is the initial state, $\delta_0: Q \times \Sigma \to Q \times \{0,+1\}$ and $\delta_1: Q \times \Sigma \to Q \times \{0,+1,-1\}$ are the transition functions, and $F\subseteq Q$ is the set of final states.
\end{definition}

Consider a \droca $\Autom$. A configuration of $\Autom$ is a pair $(q, n)\in Q \times \N$, where $q$ denotes the current state and $n$ is the counter value. The configuration $(q_0,0)$ is called the \emph{initial configuration} of $\Autom$. For an $e \in \{-1,0,+1\}$ and letter $a$, the \emph{transition} between the configurations $(p,n)$ and $(q,n+e)$ on the symbol $a$ is defined if $\delta_{\sgn(n)}({p},a) = ({q},e)$. We use $(p,n)\xrightarrow{a}(q,n+e)$ to denote this. 
The run of a word $w=a_1\dots a_n$ on $\Autom$, if it exists, is the sequence of configurations $(q_0,m_0) \xrightarrow{a_1} (q_1,m_1) \dots \xrightarrow{a_{n-1}} (q_n,m_n)$ where $m_0 = 0$. We will write $(q_0,m_0) \xrightarrow{w} (q_n,m_n)$ to denote such a sequence. Further, we say $m_n$ is the \emph{counter-effect} of $w$ (denoted by $\ce_{\Autom}(w)$). 
Note that the counter values always stay non-negative, implying a decrement is not permitted from a configuration with a zero counter value. 
We will say that $w$ is accepted by $\Autom$ if and only if $q_n \in F$. 
{A \droca $\Autom$ is \emph{complete} if, for {every} word $w$, there is exactly one run on $w$.}

The language of $\Autom$, denoted by $\Lang(\Autom)$, is the set of all words accepted by $\Autom$. Similar to the case of \dfa, we will write \( \Autom(w) \) to denote the state in $\Autom$ that is reached after reading $w$. 
The definition of language equivalence is also similar to that above. Note that there are no $\epsilon$-transitions in a \droca. 
We say two \drocas \Autom and \Butom are \emph{counter-synchronous} if \upshape$\ce_{\Autom}(w)=$\upshape$\ce_{\Butom}(w)$ for all words $w$.

A \voca is a \droca where the input alphabet $\Sigma$ is a union of three disjoint sets $ \Sigma_{call}, \Sigma_{ret},$ and  $\Sigma_{int}$. The \voca  increments (resp.~decrements) its counter on reading a symbol from $\Sigma_{call}$ (resp.~$\Sigma_{ret}$). The counter value is unchanged on reading a symbol from $\Sigma_{int}$. 

\begin{definition}[\voca]
\label{defdroca}
A visibly one-counter automaton (\voca) is a tuple $\Autom= (Q,\Sigma_{call} \cup \Sigma_{ret} \cup \Sigma_{int} , q_0,\delta_0,\delta_1,F)$, where 
$Q$ is a finite nonempty set of states, 
$\Sigma= \Sigma_{call} \cup \Sigma_{ret} \cup \Sigma_{int}$ is the input alphabet, 
$q_0\in Q$ is the initial state, 
$\delta_0: Q \times \Sigma \to Q \times \{0,+1, -1\}$, $\delta_1: Q \times \Sigma \to Q \times \{0,+1,-1\}$ are the transition functions, and 
$F\subseteq Q$ is the set of final states.
\end{definition}

The notions of counter-effect, runs and transitions for \voca remain the same as those of \droca. From the definition, \voca are deterministic. 
The counter-effect of a transition is solely based on $\Sigma$, making the starting state and counter value irrelevant.
For $\sigma\in\Sigma$:
\[\ce_{\Autom}(\sigma)= \begin{cases}
1, & \text{ if } \sigma\in \Sigma_{call}\\
-1, & \text{ if } \sigma\in \Sigma_{ret},\text{ and}\\
0, & \text{ if } \sigma\in \Sigma_{init}\\
\end{cases}\]
If $w=\sigma_1\sigma_2\ldots \sigma_n$ for some $n\in\N$ and $\sigma_1,\ldots,\sigma_n\in\Sigma$, then $\ce_{\Autom}(w)=\ce_{\Autom}(\sigma_1)+\ce_{\Autom}(\sigma_2)+\ldots +\ce_{\Autom}(\sigma_n)$. Since the counter value can never go below zero in an \oca, there will be words that do not have a valid run in a \voca.  These words will be considered as rejected. 

\subsection{\RPNI: a passive learning algorithm for DFA}
\label{subsec:rpni}
\subparagraph*{Learning framework.} 
A pair of sets of words $\S = \splus \cup \sminus$ over an alphabet $\Sigma$ is called a \emph{sample set}, and words in $\splus$ (resp., $\sminus$) are called \emph{positive} (resp., \emph{negative}) samples. 
A sample set $\S$ is called \emph{consistent} if it satisfies the condition: $\splus \cap \sminus = \emptyset$.
Typically, in a \emph{passive learning} framework for \dfas, a \emph{learner} is given an alphabet $\Sigma$, and a sample set $\S$ {over $\Sigma$}, that is consistent, and the objective is to construct an automaton $D$ such that $D$ accepts all the words in $\splus$ and rejects all the words in $\sminus$. 
In that case, abusing notation, we will say that $D$ is \emph{consistent} with the sample set $\S$.

\subparagraph*{\RPNI algorithm.}
Here we recall \RPNI, a state-of-the-art passive learning algorithm for \dfas, originally proposed in~\cite{oncina1992}. Intuitively, the algorithm starts with a prefix tree acceptor (PTA) constructed from the positive samples $\splus$, and then iteratively merges states of the PTA in a specific order (precisely, the length-lexicographic order), keeping the resulting \dfa consistent with the samples set $\S$.  
A pseudo-code of our version of \RPNI is given in~\cref{alg:rpni}. Below, we explain different steps of the algorithm.

\begin{algorithm}[t]
    \caption{\RPNI: a passive learning algorithm for \dfa}
    \label{alg:rpni}
    \KwInput{An alphabet $\Sigma$, and a \emph{consistent} sample set $\sample = \splus \cup \sminus$ over $\Sigma$.}
    \KwOutput{A \dfa \( D \) consistent with the sample.}
    \BlankLine
    \emph{Construct PTA}: Build a prefix tree acceptor \( \PT \) from \( \splus \).\\
    \emph{Pairing}: \( P = \{(u,v) ~|~ v \llex u \) and $u, v \in \pref{\splus}$\}. \\
    \emph{Ordering P}: Sort the elements of $P$ according to the increasing \llex\ order. \\
    Initialize $D \leftarrow \PT$.\\
    \While{P is not empty}{
        Pop $(u,v)$ the minimal element from $P$.\\
        \If{$D(u)$ and $D(v)$ can be merged} {
            $D \leftarrow$ Merge$(D;D(u),D(v))$.
        }
    }
    \Return \( D \).
    \end{algorithm}

We first construct a PTA (line 1) from the positive samples $\splus$, which can formally be defined as a \dfa \( \PT = (Q, \Sigma, \delta, q_\varepsilon, F) \), where \( Q = \{q_u \mid u \in \pref{\splus} \} \) represents the set of states corresponding to all prefixes of words in $\splus$, \( \delta \) is defined as follows: for all $u, u\sigma \in \pref{\splus}$ with $\sigma\in \Sigma$, we have $\delta(q_u, \sigma) = q_{u\sigma}$, the initial state $q_{\varepsilon}$ is the state corresponding to the empty word $\varepsilon$, and \( F = \{q_u \mid u \in \splus\}\) is the set of all states that correspond to words in $\splus$. We call a prefix $u$ the representative of the state $q_u$. Note that $\PT$ accepts exactly the words in $\splus$.

We then aim to merge states of the PTA in the $\llex$ ordering of their representatives, while reserving the consistency of the resulting \dfa with the negative samples. 
To this end, we consider all pairs of prefixes \( (u,v )\) in $\pref{\splus}$ such that $u$ is greater than $v$ with respect to the $\llex$ ordering (line 2), and sort this set in the increasing $\llex$ order (line 3). We then initialise the main loop of \RPNI with $\PT$ (line 4), and continue until $P$ becomes empty (line 5). Let $D$ be the \dfa at the beginning of an iteration. 
First, pop $(u,v)$ {which is} the minimal element from $P$ (line $6$). Then perform a merge of the states $D(u)$ and $D(v)$ of $D$, \ie, check whether the merged \dfa is consistent with respect to the sample set. If that is indeed (resp.,~not) the case, we accept (resp.,~discard) the merge, and update $D$ accordingly (lines 7-8). Finally, we terminate when $P$ becomes empty, and return the corresponding \dfa.

The procedure \emph{Merge} (line 7) takes as input a \dfa $D$ and two states $D(u)$ and $D(v)$, and returns a new \dfa $D'$ that is obtained from $D$ by merging those two states. We give a high-level idea of the procedure. First, if $D(u) = D(v)$, then $D' = D$. Otherwise, let us denote $D(u)$ and $D(v)$ by $q_u$ and $q_v$, respectively. Also recall that since $(u,v)$ is in the ordered set, hence $v \llex u$. Let $q'$ be such that there exists a transition $q' \xrightarrow{a}q_u$ in $D$ for some $a \in \Sigma$. Then construct $D'$ as follows: delete the state $q_u$ from $D$, redirect all incoming transitions to $q_u$ to $q_v$, and all outgoing transitions from $q_u$ will now be outgoing transitions of $q_v$. To make $D'$ deterministic, we may need to recursively \emph{fold} the subtree of $q_u$ into $q_v$: if $q_u$ and $q_v$ have successors $q_u'$ and $q_v'$ on the same letter $a$, respectively, then merge $q_u'$ and $q_v'$ as well, and repeat this process until there is no common successor remaining.

The termination and correctness of~\cref{alg:rpni} were established in~\cite{oncina1992}. In~\cref{sec:opni}, we will use \RPNI as a black-box in our passive learning algorithm for \drocas.

\section{OPNI: a passive learning algorithm for DROCA}
\label{sec:opni}

In this section, we introduce \OPNI, a passive learning algorithm for \droca.

\subparagraph*{Learning framework.} 
For passive learning of \drocas, we enhance the input of the learning algorithm with the counter values of prefixes of all words in the sample set. More formally, we assume that along with a consistent sample set $\S = \splus \cup \sminus$ over an alphabet $\Sigma$, the learning algorithm is also provided with the counter-effects for all words in $\pref{\S}$ defined by the function $\ce: \pref{\S}\to \N$.
The learner aims to construct a \droca $\Autom$ that satisfies the following two properties: (1) it accepts all words in $\splus$ and rejects all words in $\sminus$, and (2) for all $w\in\pref{\S}$, $\ce(w)=\ce_{\Autom}(w)$.
In this case, we will say that $\Autom$ is \emph{consistent} with the sample set $\S$.

\subparagraph*{Definitions.}
As in~\cite{MathewPS25}, we define the following two functions. 

\begin{enumerate}
    \item 
Let $\Sigma = \{\sigma_1, \ldots, \sigma_k\}$. 
 Define $\Actions: \Sigma^* \to \{0,1\}\times\{0,1,-1,\dc\}^{k}$ which, given a word $w$, returns a tuple consisting of the sign of the counter value reached after reading $w$ along with the effect on the counter on reading each letter from $\Sigma$ after $w$. 
 Formally, for any word $w$, and $i \in [0,k]$,
\[\Actions(w)[i]=
\begin{cases}
    \sgn(\ce(w)) & \text{if } i=0  \\
     \ce(w \sigma_i)-\ce(w) & \text{if } i>0 \text{ and } \ce(w \sigma_i) \text{ is known}\\
     \dc & \text{if } i>0 \text{ and } \ce(w \sigma_i) \text{ is not known}\\
\end{cases}
\]
Given words $w_1,w_2$, we say that $\Actions(w_1)$ is similar to $\Actions(w_2)$ (denoted $\Actions(w_1) \sim\Actions(w_2)$) if one of the following happens: (a) for all $i \in [1,k]$, we have $\Actions(w_1)[i] = \dc \ \text{or} \ \Actions(w_2)[i] = \dc \text{ or } \Actions(w_1)[i] = \Actions(w_2)[i]$, or (b) the sign of the counter values for words $w_1$ and $w_2$ are different, that is, $\Actions(w_1)[0] \neq \Actions(w_2)[0]$. (As it will be clear later, when $\Actions(w_1) \sim\Actions(w_2)$, we might be able to merge the states reached after reading $w_1$ and $w_2$ respectively.
The merging may be possible if the signs of the counter values are different, regardless of the other components in $\Actions$.) 
Otherwise, $\Actions(w_1)$ and $\Actions(w_2)$ are not similar, and we write $\Actions(w_1) \not \sim\Actions(w_2)$. For $\sigma \in \Sigma$, we write $\Act(w)|_{\sigma}$ to denote the entry in $\Act(w)$ that corresponds to the letter $\sigma$.

\item Define a function that, intuitively, given a word $w$, annotates each letter $w[i]$ of $w$ with the sign of the counter-value reached, upon reading the prefix $w[i-1]$. To that end, for an alphabet $\Sigma$, we define the modified alphabet $\tildeSigma = \bigcup_{\sigma\in\Sigma}\{ \sigma^0, \sigma^1\}$. Then, given a word $w \in \Sigma^+$ and the counter-values $\ce$ of all its prefixes, we define the encryption function as follows: for $i \in [0, |w|-1]$,
 \[\Enc(w)[i]=
\begin{cases}
     w[i]^0 & \text{if } i = 0 \\
     w[i]^{\sgn(\ce(w[0\cdots i-1]))} & \text{if } i>0 \\
\end{cases}
\]
 Additionally, $\Enc(\epsilon)= \epsilon$. 
\end{enumerate}
\begin{example}
    Let $\Sigma = \{a,b\}$ and suppose the following counter-value information are given: $\ce = \{\varepsilon \rightarrow 0, a \rightarrow 1, b \rightarrow 0, ab \rightarrow 0, bb \rightarrow 1\}$. Then $\Actions(a) = (\sgn(\ce(a)), \ce(aa)-\ce(a), \ce(ab)-\ce(a)) = (1, \dc, -1)$. On the other hand, the effect of the $\Enc$ function on the word $ab$, for example, will be $a^0b^1$. 
\end{example}

\subparagraph*{OPNI algorithm.}
We now introduce \OPNI, a passive learning algorithm for \droca, that uses \RPNI as a subroutine. A pseudo-code of the algorithm is given in~\cref{alg:opni}. 
 The algorithm can be broadly divided into two major steps: (1) a preprocessing step that creates an enriched sample set (lines 1-3,  \Cref{alg:opni}), and (2) the inference step that constructs a \droca consistent with the sample set with the help of \RPNI.

\begin{algorithm}[t]
    \caption{\OPNI: a passive learning algorithm for \droca}
    \label{alg:opni}
    \KwInput{An alphabet $\Sigma$, a \emph{consistent} sample set $\S = \splus \cup \sminus$ over $\Sigma$, the function $\ce : \pref{\S} \rightarrow \N$.} 
    \KwOutput{A \droca \( \Autom \) consistent with the sample.}
    \BlankLine
    \emph{Incorporate counter-actions}: Consider the modified sample set $\hatS = \hatsplus \cup \hatsminus$ by replacing every word $u$ in $\S$ with $\Enc(u)$. \\
    \emph{Enriching the alphabet}: Let $\Sigma_{\Act} = \{\Actions(w) \mid w \in \pref{\S}\}$. Then, enrich the alphabet $\Sigma$ with $\Sigma_{\Act}$ as follows: $\widehat{\Sigma} = \widetilde{\Sigma} \cup \Sigma_{\Act}$.
    \label{stepTwo}\\
    \emph{Enriching the sample set}: For every word $w \in \pref{\S}$: add $\Enc(w)\cdot \Actions(w)$ to $\hatsplus$, and for every other $\op \in \Sigma_{\Act}$ such that $\op \not\sim \Actions(w)$, add $\Enc(w)\cdot \op$ to $\hatsminus$.
    \label{stepThree}\\
    \emph{Apply \RPNI on $\hatS$}: Run \RPNI on the sample set $\hatS$ over $\hatSigma$. 
    Let $\hatD \leftarrow$ \RPNI$(\hatS; \hatSigma)$. \\
    {Construct a \droca $\Autom$ from $\hatD$ using the Procedure \hyperlink{alg:constOCA}{\constOCA}}.\\
    \Return \( \Autom \).
    \end{algorithm}

In the preprocessing step, we incorporate the counter information about prefixes of $\S$ to construct an input $\hatS$ for \RPNI. 
 To that end, we first replace every word $u$ in $\S$ with its encoding $\Enc(u)$ (line 1). Let us denote by $\Enc(\S)$ the set $\{\Enc(u) \mid u \in \S\}$.  We then consider the set $\Sigma_{\Act} = \{\Actions(w) \mid w \in \pref{\S}$\}, and  the modified alphabet $\widehat{\Sigma} = \widetilde{\Sigma} \cup \Sigma_{\Act}$ (line 2). Recall that, for an alphabet $\Sigma$, the alphabet $\tildeSigma$ is defined as the set $\bigcup_{\sigma\in\Sigma}\{ \sigma^0, \sigma^1\}$. Finally, for every prefix $w \in \pref{\S}$, we add the word $\Enc(w)\cdot \Actions(w)$ to $\hatsplus$, and for every other $\op \in \Sigma_{\Act}$ such that $\op \not\sim \Actions(w)$, we add the word $\Enc(w)\cdot \op$ to $\hatsminus$ (line 3). 
 
 In the inference step, we run \RPNI on the sample set $\hatS$ over the alphabet $\widehat{\Sigma}$ (line 4). Let $\hatD$ denotes the output \dfa of \RPNI, 
then we construct a \droca $\Autom$ from $\hatD$ using the construction given in Procedure \hyperlink{alg:constOCA}{\constOCA} (line 5).
For the rest of the section, we fix a $\Sigma, \S, \ce$ and $\hatD$.

The following lemmas (\Cref{lem:properties,lem:key-lemma}) ensure that the \dfa constructed by \opni satisfies certain correctness and consistency properties. These properties will later be crucial in proving the correctness of the \opni algorithm.

 \begin{lemma}
 \label{lem:properties}
The \dfa $\hatD$ satisfies the following two properties:
\begin{enumerate}
    \item for all words $w\in \splus$ (resp. $w\in\sminus$), we have $\Enc(w) \in\Lang(\hatD)$ (resp. $\Enc(w)\not\in\Lang(\hatD)$), 
    \item for any two words $w_1,w_2\in\pref{S}$, if the runs on $\Enc(w_1)$ and $\Enc(w_2)$ reach the same state in $\hatD$, then $\Act(w_1) \sim \Act(w_2)$.
\end{enumerate}
 \end{lemma}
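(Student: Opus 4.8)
The plan is to derive both parts of the lemma from the correctness guarantee of \RPNI. Recall that, given a consistent sample set, \RPNI returns a \dfa consistent with it; so once we check that $\hatS$ is consistent, we know that $\hatD = \RPNI(\hatS;\hatSigma)$ accepts every word of $\hatsplus$ and rejects every word of $\hatsminus$ --- where ``rejects'' also covers the case that the run is undefined in the possibly partial \dfa $\hatD$. Consistency of $\hatS$ itself follows from three easy facts: $\Enc$ is injective and $\splus\cap\sminus=\emptyset$; the letter sets $\tildeSigma$ and $\Sigma_{\Act}$ are disjoint, so no word of $\Enc(\S)$ contains a $\Sigma_{\Act}$-letter and the decomposition $\Enc(w)\cdot\op$ with $\op\in\Sigma_{\Act}$ is unique (the $\Sigma_{\Act}$-letter is necessarily the last letter); and $\sim$ is reflexive, so a word $\Enc(w)\cdot\Actions(w)$ added to $\hatsplus$ in line~3 is never also added to $\hatsminus$.

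Part~1 is then immediate from line~1 of \Cref{alg:opni}: for $w\in\splus$ we have $\Enc(w)\in\hatsplus$, hence $\Enc(w)\in\Lang(\hatD)$; for $w\in\sminus$ we have $\Enc(w)\in\hatsminus$, hence $\Enc(w)\notin\Lang(\hatD)$.

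For Part~2 I would argue by contradiction. Assume the runs on $\Enc(w_1)$ and $\Enc(w_2)$ both reach some state $q$ of $\hatD$, but $\Act(w_1)\not\sim\Act(w_2)$; note that $\not\sim$ is symmetric, since both clauses defining $\sim$ are. Since $w_1,w_2\in\pref{\S}$, the letter $\op := \Actions(w_1)$ lies in $\Sigma_{\Act}$. Line~3 applied to $w_1$ puts $\Enc(w_1)\cdot\op$ into $\hatsplus$, and line~3 applied to $w_2$ --- using $\op\not\sim\Actions(w_2)$ together with $\op\neq\Actions(w_2)$, which follows from reflexivity of $\sim$ --- puts $\Enc(w_2)\cdot\op$ into $\hatsminus$. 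So $\hatD$ accepts $\Enc(w_1)\cdot\op$ and rejects $\Enc(w_2)\cdot\op$. But acceptance of $\Enc(w_1)\cdot\op$ shows that the $\op$-transition out of $q = \hatD(\Enc(w_1))$ is defined, so by determinism $\hatD(\Enc(w_1)\cdot\op)$ and $\hatD(\Enc(w_2)\cdot\op)$ are one and the same state of $\hatD$ --- contradicting that one of these words is accepted and the other rejected.

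The one point that needs care is that \RPNI may output a partial \dfa, so one cannot assume a priori that the $\op$-transition out of $q$ exists; this is precisely why the argument is routed through the positive word $\Enc(w_1)\cdot\op$, whose acceptance forces that transition (and hence the run of $\Enc(w_2)\cdot\op$) to be defined. Everything else is routine bookkeeping with the definitions of $\Enc$, $\Actions$, and $\sim$, and with which words the preprocessing of \Cref{alg:opni} places into $\hatsplus$ and $\hatsminus$.
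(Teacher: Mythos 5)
Your proposal is correct and follows essentially the same route as the paper's proof: Part~1 is read off from line~1 of \Cref{alg:opni} plus the correctness of \RPNI, and Part~2 is the same contradiction via the words $\Enc(w_1)\cdot\Actions(w_1)\in\hatsplus$ and $\Enc(w_2)\cdot\Actions(w_1)\in\hatsminus$ together with determinism of $\hatD$. The extra bookkeeping you supply (consistency of $\hatS$, symmetry/reflexivity of $\sim$, and definedness of the $\op$-transition in a possibly partial \dfa) is left implicit in the paper but does not change the argument.
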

 \begin{proof}
    \begin{enumerate}
        \item From line 1 of \cref{alg:opni}, we have that, for all $w \in \splus$ (resp. $w \in \sminus$), $\Enc(w) \in \hatsplus$ (resp. $\Enc(w) \in \hatsminus$). Then the property holds from the correctness of \RPNI.
        \item Assume, towards a contradiction, that there exist words $w_1,w_2\in\S$, such that the runs on $\Enc(w_1)$ and $\Enc(w_2)$ reach the same state in $\hatD$, but $\Act(w_1) \not \sim \Act(w_2)$. From line 3 of \cref{alg:opni}, we have that,
        $\Enc(w_1)\cdot\Act(w_1) \in \hatsplus$, and $\Enc(w_2)\cdot\Act(w_1) \in \hatsminus$.
        Then due to the correctness of \RPNI, it follows that $\Enc(w_1)\cdot\Act(w_1) \in \Lang(\hatD)$, and
        $\Enc(w_2)\cdot\Act(w_1) \not\in \Lang(\hatD)$.
    Now since $\Enc(w_1)$ and $\Enc(w_2)$ reaches the same state in $\hatD$, it must also be the case that, $\Enc(w_2)\cdot\Act(w_1) \in \Lang(\hatD)$, which is a contradiction. \qedhere
    \end{enumerate}%
\end{proof}

\begin{lemma}
    \label{lem:key-lemma}
    For all $q,q' \in Q$ and $\sigma \in \Sigma$ such that $q \xrightarrow{\sigma^0} q'$ (resp., $q \xrightarrow{\sigma^1} q'$) is a transition in $\hatD$, there must exist $\act \in \Sigma_{\Act}$ and $q_f \in Q$ such that $\act[0] = 0$ (resp., $\act[0] = 1$) and $q \xrightarrow{\act}q_f$. Furthermore, $\act|_{\sigma}\in\{0,+1\}$ (resp., $\act|_{\sigma}\in\{-1,0,+1\}$).
\end{lemma}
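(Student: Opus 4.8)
The plan is to use the fact that \RPNI only ever \emph{merges} states of the prefix-tree acceptor it builds; it never manufactures a transition out of nothing. I would first record the following invariant for the execution of \cref{alg:rpni} on $\hatS$: if $D$ is the current automaton and $\phi$ is the natural surjection from the states of the PTA $\PT$ onto the states of $D$ (mapping each PTA state to its current equivalence class), then every transition $q \xrightarrow{x} q'$ of $D$ is \emph{witnessed}, meaning there exist PTA states $p,p'$ with $\phi(p)=q$, $\phi(p')=q'$ and $p \xrightarrow{x} p'$ a transition of $\PT$. This holds with $\phi = \mathrm{id}$ when $D = \PT$, and is preserved by each call to \emph{Merge}: the merge step and every step of its recursive folding only identify pairs of states, so each transition of the updated automaton is the $\phi$-image of a transition of the previous one, which by the induction hypothesis is the $\phi$-image of a PTA transition. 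Hence every transition of $\hatD$ is witnessed by a PTA transition.

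Next I would describe the shape of $\PT$. Because $\Enc$ tags each letter with the sign of the counter value of the prefix immediately preceding it, every proper prefix of a word of $\hatsplus$ lies in $\widetilde{\Sigma}^*$ and is of the form $\Enc(w)$ for some $w \in \pref{\S}$ (the words of $\hatsplus$ carry at most one action symbol, always at the very end). Consequently: (i) any $\sigma^0$-labelled (resp.\ $\sigma^1$-labelled) transition of $\PT$ has the form $q_{\Enc(w)} \xrightarrow{\sigma^0} q_{\Enc(w\sigma)}$ with $w\sigma \in \pref{\S}$ and $\ce(w) = 0$ (resp.\ $\ce(w) \ge 1$), since the last letter of $\Enc(w\sigma)$ is $\sigma^{\sgn(\ce(w))}$; and (ii) for every $w \in \pref{\S}$, line~3 of \cref{alg:opni} puts $\Enc(w)\cdot\Actions(w)$ into $\hatsplus$, so $\PT$ contains the transition $q_{\Enc(w)} \xrightarrow{\Actions(w)} q_{\Enc(w)\cdot\Actions(w)}$.

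The lemma then follows quickly. Suppose $q \xrightarrow{\sigma^0} q'$ in $\hatD$. By the invariant this transition is witnessed by a PTA transition, which by (i) has the form $q_{\Enc(w)} \xrightarrow{\sigma^0} q_{\Enc(w\sigma)}$ with $w\sigma \in \pref{\S}$ (hence $w \in \pref{\S}$) and $\ce(w) = 0$, and with $\phi(q_{\Enc(w)}) = q$. Put $\act := \Actions(w) \in \Sigma_{\Act}$. By (ii) the transition $q_{\Enc(w)} \xrightarrow{\act} q_{\Enc(w)\cdot\Actions(w)}$ of $\PT$ descends to a transition $q \xrightarrow{\act} q_f$ of $\hatD$ with $q_f := \phi(q_{\Enc(w)\cdot\Actions(w)})$, and $\act[0] = \sgn(\ce(w)) = 0$. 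For the last claim, $w\sigma \in \pref{\S}$ means $\ce(w\sigma)$ is known, so $\act|_\sigma = \ce(w\sigma) - \ce(w) = \ce(w\sigma)$; since the sample is valid this is a non-negative integer differing from $\ce(w)=0$ by at most one, i.e.\ $\act|_\sigma \in \{0,+1\}$. The $\sigma^1$ case is verbatim the same, using $\ce(w) \ge 1$ to get $\act[0] = 1$ and validity of the sample to get $\act|_\sigma = \ce(w\sigma) - \ce(w) \in \{-1,0,+1\}$.

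I expect the only point needing care is the claim that the recursive folding inside \emph{Merge} preserves the witnessing invariant; but folding is just a cascade of state identifications, so the one-line ``image of a previous transition, hence image of a PTA transition'' argument applies unchanged, and this is the sole — and minor — obstacle.
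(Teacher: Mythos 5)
Your proof is correct and follows essentially the same route as the paper: trace the $\sigma^0$- (resp.\ $\sigma^1$-) transition of $\hatD$ back to a sample prefix $w\sigma\in\pref{\S}$ with $\ce(w)=0$ (resp.\ $\ce(w)\geq 1$), then use the positive sample $\Enc(w)\cdot\Act(w)\in\hatsplus$ to obtain the $\act$-transition out of $q$ and read off $\act[0]$ and $\act|_{\sigma}$ from the known value $\ce(w\sigma)$. The only difference is one of explicitness: you prove the witnessing step via the state-merging invariant of \RPNI (which the paper simply asserts when it claims some $w\sigma\in\pref{\S}$ satisfies $q_0\xrightarrow{\Enc(w)}q$ in $\hatD$), and you get the $\act$-transition from the PTA-to-$\hatD$ quotient map rather than from acceptance of $\Enc(w)\cdot\act$ by $\hatD$.
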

\begin{proof}
    Let $q,q' \in Q$ and $\sigma \in \Sigma$ be such that $q \xrightarrow{\sigma^0} q'$. Then, there exists $w\sigma \in \pref{\S}$ such that $q_0 \xrightarrow{\Enc(w)} q$ in $\hatD$ with $\ce(w) = 0$ and $\Act(w)[0] = 0 \in \Sigma_{\Act}$. The latter condition implies that $\Enc(w)\cdot\Act(w) \in \hatsplus$. Let $\act = \Act(w)$. Therefore, from the correctness of \RPNI, we have that $\Enc(w)\cdot\act \in \Lang(\hatD)$, which implies that there exists $q_f \in Q$ such that $q \xrightarrow{\act}q_f$ in $\hatD$. Since $w\sigma$ is in $\pref{\S}$, and $\ce(w\sigma)$ is given in the input, we have that $\act|_{\sigma} \in \{0,+1\}$. The proof for the other case, when the counter-effect of $w$ is positive, is similar.
\end{proof}

\subparagraph*{Procedure \constOCA.}
\hypertarget{alg:constOCA}
Let  $\hatD =(Q, \hatSigma, q_0, \delta, F)$.
We then define the $\droca$ $\Autom= (Q, \Sigma, q_0, \delta_0, \delta_1, F)$ where $\delta_0$ and $\delta_1$ are specified as follows.
For every $q\in Q$, $\sigma\in\Sigma$, if there is a transition $q \xrightarrow{\sigma^0}q'$ for some $q'$ in $\hatD$, then, thanks to \cref{lem:key-lemma}, there exist $\act \in \Sigma_{\Act}$ and $q_f \in Q$ such that $\act[0] = 0$ and $q \xrightarrow{\act}q_f$ with $\act|_{\sigma}\in\{0,+1\}$. Let $c = \act|_{\sigma}$. Then define $\delta_0(q,{\sigma}) = (q', c)$. The definition of $\delta_1$ is analogous.

\begin{lemma}
\label{lem:construct-droca}
Let $\Autom = \constOCA(\hatD)$. Then, $\Autom$ is consistent with the sample set $\S$.
\end{lemma}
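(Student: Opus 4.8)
The plan is to prove, by induction on $|w|$, the following invariant: for every $w \in \pref{\S}$ the \droca $\Autom$ admits a run on $w$ that ends in the configuration $\bigl(\hatD(\Enc(w)),\, \ce(w)\bigr)$. Granting the invariant, consistency of $\Autom$ with $\S$ is immediate. For any $w \in \pref{\S}$ the run reaches counter value $\ce(w)$, so $\ce_\Autom(w) = \ce(w)$; and since every $w \in \splus$ (resp. $w \in \sminus$) is a prefix of itself, the run of $\Autom$ on $w$ reaches the state $\hatD(\Enc(w))$, which by the first part of \cref{lem:properties} lies in $F$ (resp. outside $F$). As $\Autom$ and $\hatD$ carry the same final set $F$, this means $\Autom$ accepts exactly the words of $\splus$ and rejects exactly those of $\sminus$.

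For the induction, the base case $w=\varepsilon$ is clear, since $\Autom$ starts in $(q_0,0)$, $\hatD(\Enc(\varepsilon))=\hatD(\varepsilon)=q_0$, and $\ce(\varepsilon)=0$. For the inductive step take $w\sigma \in \pref{\S}$; then $w \in \pref{\S}$, and the invariant supplies a run of $\Autom$ on $w$ ending in $(q,\ce(w))$ with $q = \hatD(\Enc(w))$. Put $s = \sgn(\ce(w))$. Unfolding the definition of $\Enc$ gives $\Enc(w\sigma) = \Enc(w)\cdot\sigma^{s}$, and since $\Enc(w\sigma)$ is a prefix of $\Enc(w\sigma)\cdot\Act(w\sigma) \in \hatsplus$ (added at line 3 of \cref{alg:opni}), $\hatD$ has a run on $\Enc(w\sigma)$; by determinism this run is $q \xrightarrow{\sigma^{s}} q'$ with $q' = \hatD(\Enc(w\sigma))$. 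Hence Procedure \constOCA\ defines the transition $\delta_{s}(q,\sigma)=(q',c)$ for some $c$ with $c\in\{0,+1\}$ when $s=0$ and $c\in\{-1,0,+1\}$ when $s=1$. Reading $\sigma$ from $(q,\ce(w))$ uses $\delta_{\sgn(\ce(w))}=\delta_{s}$, so $\Autom$ moves to $(q',\ce(w)+c)$; thus to close the induction it suffices to show $c = \ce(w\sigma)-\ce(w)$, which (using $\ce(w\sigma)\ge 0$) also guarantees the step is legal.

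The heart of the argument is therefore the identity $c = \ce(w\sigma)-\ce(w) = \Act(w)|_\sigma$. By construction $c = \act|_\sigma$ for some $\act \in \Sigma_{\Act}$ with $\act[0]=s$, $\act|_\sigma \neq \dc$, and $q \xrightarrow{\act} q_f$ in $\hatD$. The first step is a structural observation: letters of $\Sigma_{\Act}$ occur in $\hatsplus$ only as the final symbol of the words $\Enc(w'')\cdot\Act(w'')$, so in the prefix tree acceptor fed to \RPNI\ a $\Sigma_{\Act}$-transition leaves the state of $\Enc(w'')$ only on the letter $\Act(w'')$; since \RPNI's state-merging only redirects and folds existing transitions and never introduces a fresh letter-transition, the transition $q \xrightarrow{\act} q_f$ forces $\act = \Act(w'')$ for some $w'' \in \pref{\S}$ with $\hatD(\Enc(w''))=q$. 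Now both $\Enc(w'')$ and $\Enc(w)$ reach $q$ in $\hatD$, so the second part of \cref{lem:properties} yields $\Act(w'') \sim \Act(w)$, i.e. $\act \sim \Act(w)$. Because $\act[0]=s=\sgn(\ce(w))=\Act(w)[0]$, clause (b) in the definition of $\sim$ is excluded, so clause (a) holds; evaluating it at the coordinate of $\sigma$ — where $\act|_\sigma\neq\dc$ by the choice made in \constOCA\ and $\Act(w)|_\sigma\neq\dc$ because $w\sigma\in\pref{\S}$ — forces $\act|_\sigma = \Act(w)|_\sigma = \ce(w\sigma)-\ce(w)$, as required.

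I expect the main difficulty to be exactly this last point: the witness $\act$ picked by \constOCA\ is a priori unrelated to the word $w$ whose run we are tracking, so one must argue it nonetheless records the correct counter-effect on $\sigma$. The two essential ingredients are (i) the structural fact locating the $\Sigma_{\Act}$-transitions of $\hatD$, which ties $\act$ back to some word $w''$ landing in the state $q$, and (ii) the second part of \cref{lem:properties}, which through the case analysis built into $\sim$ pins $\act|_\sigma$ to the intended value. The remainder — existence of the runs, the identity $\Enc(w\sigma)=\Enc(w)\sigma^{s}$, and non-negativity of the counter — is routine bookkeeping.
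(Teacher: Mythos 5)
Your proof is correct and follows essentially the same route as the paper's: an induction over prefixes showing that $\Autom$'s run on $w$ ends in the state $\hatD(\Enc(w))$ with counter value $\ce(w)$, followed by part (1) of \cref{lem:properties} to settle acceptance and rejection. The only difference is that you explicitly justify, via the provenance of $\Sigma_{\Act}$-labelled transitions in the \RPNI output together with part (2) of \cref{lem:properties}, why the counter action chosen by \constOCA\ equals $\ce(w\sigma)-\ce(w)$ — a step the paper asserts directly, relying on its brief well-definedness remark.
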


\begin{proof}
We know, from Lemma~\ref{lem:properties}, that for any word $w_1, w_2 \in\Sigma^*$, if $\hatD$ on reading $\Enc(w_1)$ and $\Enc(w_2)$ reaches the same state, then $\Actions(w_1)$ is similar to $\Actions(w_2)$. This implies that the transition functions of $\Autom$ are well-defined. 
We will now show that $\Autom$ is consistent with the sample $\S$.
\begin{clam}
\label{claim:ce}
    For all $w \in\pref{\S}$, $\ce(w)= \ce_{\Autom}(w)$.
\end{clam}
\begin{clamproof}
We will prove this claim by induction on the length of $w$. This is trivial for $w = \varepsilon$ (base case), since by definition, $\ce(\varepsilon) = \ce_{\Autom}(\varepsilon) = 0$.
Now, assume that the claim is true for all words $u$ of length less than or equal to $l$, for some $l \ge 0$. Let $u\sigma$ be a prefix in $\pref{\S}$. We will show: $\ce(u\sigma) = \ce_{\Autom}(u\sigma)$.
Let $q, q'\in Q$ be such that $\hatD(\Enc(u)) = q$, and $q \xrightarrow{\sigma^c}q'$ is a transition in $\hatD$, where $c = \sgn(\ce(u))$. Then, by the construction of $\Autom$, we have that $\delta_c(q,\sigma) = (q', t)$ where $t = \ce(u\sigma) - \ce(u)$. 
Therefore, $\ce_{\Autom}(u\sigma) = \ce_{\Autom}(u) + t$. Using induction hypothesis, we can rewrite the above equation as  $\ce_{\Autom}(u\sigma) = \ce(u) + t = \ce(u\sigma)$. This concludes the proof of the claim. 
\end{clamproof}
 
\cref{claim:ce} also ensures that $\Enc(w)= \Enc_{\Autom}(w)$. 
By construction of $\Autom$, for all $w \in \pref{\S}$, $\hatD(\Enc(w)) = \Autom(w)$.
Since, from \cref{lem:properties}, for all $w\in \splus$, we have $\Enc(w) \in \Lang(\hatD)$, thus we conclude that $w \in \Lang(\Autom)$. Similarly for $w \in \sminus$, we have $w \not \in \Lang(\Autom)$. Together with \cref{claim:ce}, this concludes the proof of the lemma. 
\end{proof}

\begin{figure}[t]
    \centering
        \begin{subfigure}[t]{0.25\textwidth}
        \centering
        \scalebox{0.5}{


\begin{tikzpicture}[
  vertex/.style = {circle, draw, minimum size=8mm, inner sep=1pt},
  final/.style = {double, double distance=1pt},
  ->, >=Stealth
  ]
\Large
\node[vertex] (1) at (0,1) {}
  child[grow=down] {node[vertex] (2) at (-2,-0.2) {}
    child {node[vertex, final] (5) at (-1.4,-0.4) {}}
    child {node[vertex, final] (6) at (-.8,-0.4) {}
      child {node[vertex, final] (7) at (0,-0.6) {}}
    }
  }
  child[grow=down] {node[vertex] (3) at (0,-0.2) {}
    child {node[vertex, final] (8) at (.8,-0.4) {}
      child {node[vertex, final] (10) at (0,-0.6) {}}
    }
    child {node[vertex, final] (9) at (1.4,-0.4) {}}
  }
  child[grow=down] {node[vertex, final] (4) at (2,-0.2) {}};
  
\node (init) at (0,2) {};
\draw[->] (init) -- (1);

\draw[->] (1) -- (2) node[midway, left] {\huge $a^0$};
\draw[->] (1) -- (3) node[midway, right] {\huge $b^0$};
\draw[->] (1) -- (4) node[midway, above right] {\huge $r$};

\draw[->] (2) -- (5) node[midway, left] {\huge $s$};
\draw[->] (2) -- (6) node[midway, right] {\huge $b^1$};

\draw[->] (6) -- (7) node[midway, right] {\huge $t$};

\draw[->] (3) -- (8) node[midway, left] {\huge $b^0$};
\draw[->] (3) -- (9) node[midway, right] {\huge $u$};

\draw[->] (8) -- (10) node[midway, right] {\huge $v$};

\end{tikzpicture}

        \caption{\centering Prefix Tree.}
        \label{fig:opni-pta}
    \end{subfigure}
\hfill
    \begin{subfigure}[t]{0.25\textwidth}
    \centering
        \scalebox{0.5}{


\begin{tikzpicture}[
 vertex/.style = {circle, draw, minimum size=8mm, inner sep=1pt},
  final/.style = {double, double distance=1pt},
  ->, >=Stealth
  ]
\Large
\node[vertex] (1) at (0,0) {};

\node (init) at (0,1) {};
\draw[->] (init) -- (1);

\draw[->] (1) edge[loop left] node {\huge $a^0$} (1);

\node[vertex] (3) at (-2.5,-2) {};
\node[vertex, final] (4) at (-.7,-2) {};
\node[vertex, final] (5) at (.7,-2) {};
\node[vertex, final] (6) at (2.5,-2) {};

\node[vertex, final] (8) at (-2.5,-4) {};
\node[vertex, final] (9) at (-1,-4) {};
\node[vertex, final] (7) at (2.5,-4) {};

\node[vertex, final] (10) at (-2.5,-6) {};

\draw[->] (1) -- (3) node[midway, left] {\huge $b^0$};
\draw[->] (1) -- (4) node[midway, left] {\huge $r$};
\draw[->] (1) -- (5) node[midway, right] {\huge $s$};
\draw[->] (1) -- (6) node[midway, right, xshift=0.2cm] {\huge $b^1$};

\draw[->] (3) -- (8) node[midway, left] {\huge $b^0$};
\draw[->] (3) -- (9) node[midway, right] {\huge $u$};

\draw[->] (8) -- (10) node[midway, right] {\huge $v$};

\draw[->] (6) -- (7) node[midway, right] {\huge $t$};

\end{tikzpicture}

        \caption{\centering Merge(\PT;$q_{\varepsilon},q_{a^0}$).}
        \label{fig:opni-step1}
    \end{subfigure}
    \hfill
    \begin{subfigure}[t]{0.2\textwidth}
    \centering
        \scalebox{0.5}{


\begin{tikzpicture}[
  vertex/.style = {circle, draw, minimum size=8mm, inner sep=1pt},
  final/.style = {double, double distance=1pt},
  ->, >=Stealth
  ]
\Large
\node[vertex] (1) at (0,0) {\huge $q_0$};

\node (init) at (0,1) {};
\draw[->] (init) -- (1);

\draw[->] (1) edge[loop left] node {\huge $a^0$} (1);

\node[vertex] (3) at (-1.2,-2) {\huge $q_1$};
\node[vertex, final] (4) at (1.2,-2) {\huge $q_2$}; 

\draw[->] (1) -- (3) node[midway, left] {\huge $b^0$};
\draw[->] (1) -- (4) node[midway, right] {\huge $b^1, r, s$};

\draw[->] (3) -- (4) node[midway, below] {\huge $b^0, u$};

\draw[->] (4) edge[loop right] node {\huge $t, v$} (4);

  \draw[dashed, gray!10] (2,-4.2) -- (2,-4.2);
\end{tikzpicture}

         \caption{\centering Output of \RPNI.}
        \label{fig:opni-hatD} 
     \end{subfigure}
     \begin{subfigure}[t]{0.2\textwidth}
     \centering
        \scalebox{0.5}{


\begin{tikzpicture}[
  vertex/.style = {circle, draw, minimum size=8mm, inner sep=1pt},
  final/.style = {double, double distance=1pt},
  ->, >=Stealth
  ]
\Large
\node[vertex] (1) at (0,0) {\huge $q_0$};

\node (init) at (0,1) {};
\draw[->] (init) -- (1);

\draw[->] (1) edge[loop left] node {\huge $a_{=0}, +1$} (1);

\node[vertex] (3) at (-1.2,-2) {\huge $q_1$};
\node[vertex, final] (4) at (1.2,-2) {\huge $q_2$};

\draw[->] (1) -- (3) node[midway, left] {\huge $b_{=0}, 0$};
\draw[->] (1) -- (4) node[midway, right] {\huge $b_{>0}, -1$};

\draw[->] (3) -- (4) node[midway, below,yshift=-.4cm] {\huge $b_{=0}, +1$};
  \draw[dashed, gray!10] (2,-4.2) -- (2,-4.2);
\end{tikzpicture}

        \caption{\centering Output \droca.}
        \label{fig:opni-output}
    \end{subfigure}
    \caption{Different steps of the \OPNI algorithm on \cref{eg:opni-example}. \cref{fig:opni-pta} represents the prefix tree \PT; \cref{fig:opni-step1} represents the \dfa after merging $q_{\varepsilon}$ and $q_{a^0}$; \cref{fig:opni-hatD} is the \dfa that is the output of \RPNI; \cref{fig:opni-output} is the \droca output by the \opni algorithm.}
    \label{fig:example}
\end{figure}

Let us now illustrate the \opni algorithm on an example.
\begin{example}
\label{eg:opni-example}
    Let $\S = \splus \cup \sminus$ be a sample set over an alphabet $\Sigma = \{a,b\}$ with $\splus = \{ab, bb\}$ and $\sminus = \{a, b\}$. Suppose the counter-values for words in $\pref{\S}$ are: $\ce = \{\varepsilon \rightarrow 0, a \rightarrow 1, b \rightarrow 0, ab \rightarrow 0, bb \rightarrow 1\}$. Below, we illustrate different steps of \OPNI on this example.
    \begin{itemize}
    \item Using the $\Enc$ function, construct (line 1) the modified sample set $\hatS = \hatsplus \cup \hatsminus$ where $\hatsplus = \{a^0b^1, b^0b^0\}$ and $\hatsminus = \{a^0, b^0\}$ over the alphabet $\hatSigma = \{a^0,a^1,b^0,b^1\}$. 
    \item Then, we construct (line 2) the set $\Sigma_{\Act}$ as follows. We evaluate the $\Actions$ function for words in $\pref{S}$.
    For example, $\Actions(\varepsilon) = (\sgn(\ce(\varepsilon)), \ce(a)-\ce(\varepsilon), \ce(b)-\ce(\varepsilon)) = (0, +1, 0)$. Similarly, we compute $\Actions(w)$ for words in $\pref{\S}$, and get ${\Actions} = \{\varepsilon \rightarrow (0,+1,0), a \rightarrow (1, \dc, -1), ab \rightarrow (0,\dc,\dc), b \rightarrow (0, \dc, +1), bb \rightarrow (1,\dc,\dc)\}$. For simplicity and better readability, we assign symbols from the English alphabet to these action tuples, and write:
    $(0,+1,0) = r, \ (1, \dc, -1) = s, \ (0,\dc,\dc) = t, \ (0, \dc, +1) = u, \text{ and } (1,\dc,\dc)= v$ 
    and therefore, $\Sigma_{\Act} = \{r,s,t,u,v\}$. Consequently, $\hatSigma = \{a^0,a^1,b^0,b^1,r,s,t,u,v\}$.
    \item We then enrich the sample set $\hatS = \hatsplus \cup \hatsminus$ as in line 3 of the algorithm. For example, since $\varepsilon$ is a prefix of $\S$, we will include $\Enc(\varepsilon)\cdot \Act(\varepsilon) = r$ in $\hatsplus$, and since $u$ is not similar to $r$, we include $u$ into $\hatsminus$. Doing this for other elements in $\hatS$, we get the following sets: $\hatsplus = \{a^0b^1, b^0b^0\} \cup \{r, a^0s, a^0b^1t, b^0u, b^0b^0v\}$, and $\hatsminus = \{a^0, b^0\} \cup \{u, b^0r\}$.
    \item Next, we apply \RPNI (line 4) to this newly constructed sample set $\hatS$ over the alphabet $\hatSigma$. Different steps of the \RPNI algorithm on $\hatS$ are shown in \cref{fig:example}. The first two steps are depicted in \cref{fig:opni-pta} and \cref{fig:opni-step1}, and \cref{fig:opni-hatD} represents the output of \RPNI on $\hatS$.  The intermediate steps of \RPNI are shown in \cref{fig:AppExample}.
    \item
    Finally, we construct a \droca $\Autom$ from $\hatD$ using the Procedure \hyperlink{alg:constOCA}{\constOCA} (line 5). For example, consider the transition $q_0 \xrightarrow{b^1} q_2$ in $\hatD$. Due to \cref{lem:key-lemma}, there must exist a transition from its source state $q_0$ on an action $\act$ with $\act[0] = 1$. We deduce that $\act = r = (1, \dc, -1)$, and the transition is $q_0 \xrightarrow{r} q_2$. Therefore, according to \hyperlink{alg:constOCA}{\constOCA} procedure, we assign $\delta_1(q_0,b) = (q_2,-1)$. \cref{fig:opni-output} represents the output \droca of \hyperlink{alg:constOCA}{\constOCA} on $\hatD$. 
    \end{itemize}
\end{example}

\begin{figure}[h]
    \centering
        \begin{subfigure}[t]{0.45\textwidth}
        \centering
        \scalebox{.8}{


\begin{tikzpicture}[
  vertex/.style = {circle, draw, minimum size=8mm, inner sep=1pt},
  final/.style = {double, double distance=1pt},
  ->, >=Stealth
  ]

\node[vertex] (1) at (0,0) {};

\node (init) at (0,1) {};
\draw[->] (init) -- (1);

\draw[->] (1) edge[loop left] node {\huge $a^0$} (1);

\node[vertex] (3) at (-3,-2.5) {};
\node[vertex, final] (4) at (0,-2.5) {}; 
\node[vertex, final] (5) at (1.2,-2.5) {}; 

\node[vertex, final] (8) at (-3.5,-4) {};
\node[vertex, final] (9) at (-2.5,-4) {};
\node[vertex, final] (7) at (0,-4) {}; 

\node[vertex, final] (10) at (-3.5,-5.3) {}; 

\draw[->] (1) -- (3) node[midway, left] {\huge $b^0$};
\draw[->] (1) -- (4) node[pos=.9, left] {\huge $b^1, r$};
\draw[->] (1) -- (5) node[midway, right] {\huge $s$};

\draw[->] (3) -- (8) node[midway, left] {\huge $b^0$};
\draw[->] (3) -- (9) node[midway, right] {\huge $u$};

\draw[->] (8) -- (10) node[midway, right] {\huge $v$};

\draw[->] (4) -- (7) node[midway, right] {\huge $t$};

\end{tikzpicture}

        \caption{\centering Merge of $a^0b^1,r$.}
        \label{fig:step-2}
    \end{subfigure}
    \begin{subfigure}[t]{0.45\textwidth}
    \centering
        \scalebox{.8}{


\begin{tikzpicture}[
  vertex/.style = {circle, draw, minimum size=8mm, inner sep=1pt},
  final/.style = {double, double distance=1pt},
  ->, >=Stealth
  ]

\node[vertex] (1) at (0,0) {};

\node (init) at (0,1) {};
\draw[->] (init) -- (1);

\draw[->] (1) edge[loop left] node {\huge $a^0$} (1);

\node[vertex] (3) at (-1.2,-2.5) {}; 
\node[vertex, final] (4) at (1.2,-2.5) {}; 

\node[vertex, final] (8) at (-1.7,-4) {};
\node[vertex, final] (9) at (-0.7,-4) {};
\node[vertex, final] (7) at (1.2,-4) {}; 

\node[vertex, final] (10) at (-1.7,-5.3) {}; 

\draw[->] (1) -- (3) node[midway, left] {\huge $b^0$};
\draw[->] (1) -- (4) node[midway, right] {\huge $b^1, r, s$};

\draw[->] (3) -- (8) node[midway, left] {\huge $b^0$};
\draw[->] (3) -- (9) node[midway, right] {\huge $u$};

\draw[->] (8) -- (10) node[midway, right] {\huge $v$};

\draw[->] (4) -- (7) node[midway, right] {\huge $t$};

\end{tikzpicture}

        \caption{\centering Merge of $a^0s,r$.}
        \label{fig:step-3}
    \end{subfigure}
    \begin{subfigure}[t]{0.45\textwidth}
    \centering
        \scalebox{.8}{


\begin{tikzpicture}[
  vertex/.style = {circle, draw, minimum size=8mm, inner sep=1pt},
  final/.style = {double, double distance=1pt},
  ->, >=Stealth
  ]

\node[vertex] (1) at (0,0) {};

\node (init) at (0,1) {};
\draw[->] (init) -- (1);

\draw[->] (1) edge[loop left] node {\huge $a^0$} (1);

\node[vertex] (3) at (-1.2,-2.5) {};
\node[vertex, final] (4) at (1.2,-2.5) {}; 

\node[vertex, final] (9) at (-1.3,-4) {};  
\node[vertex, final] (7) at (0,-4) {};   
\node[vertex, final] (10) at (1.3,-4) {}; 

\draw[->] (1) -- (3) node[midway, left] {\huge $b^0$};
\draw[->] (1) -- (4) node[midway, right] {\huge $b^1, r, s$};

\draw[->] (3) -- (9) node[midway, left] {\huge $u$};
\draw[->] (3) -- (4) node[midway, above] {\huge $b^0$}; 

\draw[->] (4) -- (7) node[midway, left] {\huge $t$};
\draw[->] (4) -- (10) node[midway, right] {\huge $v$};

 \draw[dashed, gray!10] (2,-4.5) -- (2,-4.5);

\end{tikzpicture}

         \caption{\centering Merge of $b^0b^0,r$.}
        \label{fig:step-4} 
     \end{subfigure}
     \begin{subfigure}[t]{0.5\textwidth}
     \centering
        \scalebox{.8}{


\begin{tikzpicture}[
  vertex/.style = {circle, draw, minimum size=8mm, inner sep=1pt},
  final/.style = {double, double distance=1pt},
  ->, >=Stealth
  ]

\node[vertex] (1) at (0,0) {};

\node (init) at (0,1) {};
\draw[->] (init) -- (1);

\draw[->] (1) edge[loop left] node {\huge $a^0$} (1);

\node[vertex] (3) at (-1.2,-2.5) {};
\node[vertex, final] (4) at (1.2,-2.5) {}; 

\node[vertex, final] (7) at (0,-4) {};   
\node[vertex, final] (10) at (1.2,-4) {}; 

\draw[->] (1) -- (3) node[midway, left] {\huge $b^0$};
\draw[->] (1) -- (4) node[midway, right] {\huge $b^1, r, s$};

\draw[->] (3) -- (4) node[midway, above] {\huge $b^0,u$}; 

\draw[->] (4) -- (7) node[midway, left] {\huge $t$};
\draw[->] (4) -- (10) node[midway, right] {\huge $v$};

 \draw[dashed, gray!10] (2,-4.5) -- (2,-4.5);
 
\end{tikzpicture}

        \caption{\centering Merge of $b^0u,r$.}
        \label{fig:step-5}
    \end{subfigure}
    \caption{Intermediate steps of the \OPNI algorithm on \cref{eg:opni-example}.}
    \label{fig:AppExample}
\end{figure}

\subparagraph*{Justification of counter values in the input.} We make use of the counter information in the preprocessing step to construct two sets: $\Enc(\S)$ and $\Sigma_{\Act}$. The counter-actions and signs of counter values in the input are required to prevent \RPNI on the sample set $\hatS$ (line 4) from doing some of the merges. Let us illustrate this with an example. 
Suppose in a certain iteration of \RPNI, we are trying to merge two states $D(u)$ and $D(v)$, for some prefixes $u$ and $v$ of $\splus$, and let both states have outgoing transitions on a letter $a$. As such, if there is no information on the counter-actions, \RPNI will merge these two states, provided that the merged automaton is consistent with $\sminus$. 
However, suppose that $\sgn(\ce(u)) = \sgn(\ce(v))$, but $\Actions(u) \not\sim \Actions(v)$. In that case, these two states should not be merged, since in any \droca that is consistent with the sample, $u$ and $v$ must lead to two different states. This justifies the construction of $\Sigma_{\Act}$. Now suppose that $\sgn(\ce(u)) \neq \sgn(\ce(v))$. In that case, those two states could very well be merged. This justifies the preprocessing steps in \OPNI.

 Since \RPNI is guaranteed to terminate, together with \cref{lem:construct-droca}, we conclude the following.
\begin{theorem}
    \cref{alg:opni} terminates, and returns a \droca consistent with the input.
\end{theorem}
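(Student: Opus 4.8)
The plan is to assemble the theorem from the three lemmas already established, reducing everything to the termination and correctness of \RPNI (proved in~\cite{oncina1992}). The statement has two parts: (i) \cref{alg:opni} terminates, and (ii) the returned \droca is consistent with the input sample set $\S$.

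\textbf{Termination.} I would argue that every step of \cref{alg:opni} is effective and halting. The preprocessing (lines 1--3) consists of finitely many evaluations of $\Enc$ and $\Actions$ over the finite set $\pref{\S}$, so $\hatS$ and $\hatSigma$ are finite and computed in finite time; note $\Sigma_{\Act}$ is finite because $\pref{\S}$ is finite. Line 4 invokes \RPNI on the finite, consistent sample set $\hatS$ over the finite alphabet $\hatSigma$, and \RPNI is guaranteed to terminate (as recalled at the end of \cref{subsec:rpni}). Here I must check that $\hatS$ is a legitimate input to \RPNI, i.e. that $\hatsplus \cap \hatsminus = \emptyset$; this requires a brief verification that the enrichment in lines 1 and 3 does not create a word that is both positive and negative --- for words of the form $\Enc(w)$ this follows from consistency of $\S$ together with injectivity of $\Enc$, and for words of the form $\Enc(w)\cdot\op$ it follows because $\Actions(w)$ is added positively while only $\op \not\sim \Actions(w)$ are added negatively, and $\sim$ is reflexive. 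Finally, the Procedure \constOCA performs a single pass over the transitions of the finite \dfa $\hatD$, invoking \cref{lem:key-lemma} to guarantee the required $\act$ exists, so it terminates as well. Composing finitely many terminating steps yields termination of \cref{alg:opni}.

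\textbf{Correctness.} This is essentially immediate from the lemmas: \cref{lem:properties} establishes the two structural properties of $\hatD$, \cref{lem:key-lemma} guarantees that the construction in \constOCA is well-defined (the needed action letter and target state always exist, with the counter-effect in the admissible range for $\delta_0$ respectively $\delta_1$), and \cref{lem:construct-droca} states precisely that $\Autom = \constOCA(\hatD)$ is consistent with $\S$ --- meaning $\Autom$ accepts all of $\splus$, rejects all of $\sminus$, and satisfies $\ce_{\Autom}(w) = \ce(w)$ for all $w \in \pref{\S}$. Since the output of \cref{alg:opni} is exactly this $\Autom$, the theorem follows.

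\textbf{Main obstacle.} The mathematically substantive content has already been discharged in \cref{lem:properties,lem:key-lemma,lem:construct-droca}, so the only genuine gap is the bookkeeping needed to legitimately call \RPNI: verifying that $\hatS$ is consistent (disjoint positive/negative parts) and that $\Enc$ is injective on $\Sigma^*$ given fixed counter values, so that distinct samples of $\S$ do not collide after encoding. I expect this to be the step requiring the most care, though it is routine; everything else is a short assembly of the cited results.
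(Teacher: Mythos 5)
Your proposal is correct and follows essentially the same route as the paper, which disposes of the theorem in a single sentence by combining the guaranteed termination of \RPNI with \cref{lem:construct-droca} (itself resting on \cref{lem:properties,lem:key-lemma}). The extra bookkeeping you flag --- checking that $\hatS$ is a consistent sample (via disjointness of $\tildeSigma$ and $\Sigma_{\Act}$, injectivity of $\Enc$, and reflexivity of $\sim$) so that \RPNI may legitimately be invoked --- is a valid and slightly more careful elaboration of a point the paper leaves implicit, not a different approach.
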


\section{\opniL: \OPNI-guided active learning procedure for OCA}
\label{sec:active-passive}

In this section, we provide a procedure for active learning of \droca using \opni. 

\subparagraph*{Learning framework.} 
Following~\cite{MathewPS25, bps}, in an \emph{active learning} framework for \drocas, the \emph{learner}'s aim is to construct a \droca $\Butom$ by interacting with a \emph{teacher} that has a \droca $\Autom$ in mind, such that $\Autom$ and $\Butom$ are equivalent. In the process, the learner can ask the teacher three types of queries:
\begin{itemize}
\item \emph{Membership queries} \textsf{MQ}$_\Autom$: the learner provides a  word $w\in\Sigma^*$. The teacher returns $1$ if  $w\in\Lang(\Autom)$, and $0$ if $w\not\in\Lang(\Autom)$.
\item \emph{Counter value queries} \textsf{CV}$_\Autom$: the learner asks the counter value reached on reading a word $w$. The teacher returns the counter value, \ie, $\ce_{\Autom}(w)$.
\item \emph{Minimal synchronous-equivalence queries} \textsf{MSQ}$_\Autom$: the learner asks whether a \droca $\Cutom$ is equivalent and counter-synchronous to $\Autom$. The teacher returns \emph{yes} only if $\Cutom$ and $\Autom$ are counter-synchronous and equivalent. Otherwise, the teacher provides a `minimal' counter-example {$z \in \Sigma^*$} such that either $\Cutom(z) \neq \Autom(z)$ or $\ce_{\Autom}(z) \neq \ce_{\Cutom}(z)$. 
\end{itemize}

Note that the minimal synchronous-equivalence queries (introduced in \cite{MathewPS25}) may return a word for which the two machines reach different counter values. 
This query enables the active learning method to construct a \droca that is counter-synchronous and equivalent to the teacher's \droca. As noted in \cite{MathewPS25},  synchronous equivalence is significantly faster than the impractical general equivalence check. Hence, relaxing this condition and having minimal equivalence queries will require the teacher to use a general equivalence check. Thus, in our implementation (described in \cref{sec:experiments}), we opt for equivalence queries on minimal counter-synchronous \droca, as they are faster. 

\paragraph*{Learning of DROCA}
We introduce an active learning procedure for \droca, called \opniL.
The pseudocode is presented in Procedure~\ref{alg:activeLearn}, and its key components are detailed below. \opniL builds on Angluin's $L^*$ algorithm with modifications inspired by the \minOCA framework from~\cite{MathewPS25}.

\subparagraph*{Observation table.}
The learner maintains an observation table over the input alphabet $\Sigma= \{\sigma_1,\ldots, \sigma_k\}$ for some $k\in\N$.
An observation table $T$ is given by $T=(\R,\C,\Memb,\ce\upharpoonright_{\R\cup\R\Sigma}, \Act)$, where $\R$ is a nonempty prefix-closed set of strings, $\C$ is a nonempty suffix-closed set of string, $\Memb:(\R \cup \R \Sigma)  \C \to \{0, 1\}$ is a function that indicates whether a word belongs to the language, $\ce\upharpoonright_{\R\cup\R\Sigma}: \R\cup\R\Sigma \to \N$ is the function $\ce$ with domain restricted to the set $\R\cup\R\Sigma$, and $\Actions: (\R \cup \R \Sigma)  \C \to \{0,1\}\times\{0,1,-1 \}^k$ is a function representing the sign of the counter value reached and the counter-actions on every letter after reading a word. 
The function $\Act$ is the same as the one defined in \cref{sec:opni}. 
Given $w\in (\R\cup \R  \Sigma)  \C$, the function $\Memb(w)$ returns $0$ (resp., $1$) if $\Autom(w)$ is equal to $0$ (resp., $1$).
The observation table initially has $\R=\C=\{\epsilon\}$ and is updated as the procedure runs.
Let $\C=\{c_1,\ldots,c_{\ell}\}$ for some $\ell\in\N$.
For any $r\in\R\cup\R\Sigma$, we use $row(r)$ to denote the tuple $(\ce(r), (\Memb(rc_1),\Act(rc_1)), \ldots, (\Memb(rc_\ell),\Act(rc_\ell)))$. 

\begin{definition}[Definition 5 in \cite{MathewPS25}]
Let $d\in\N$ and $T = (\R,\C,\Memb,\ce\upharpoonright_{\R\cup\R\Sigma},\Act)$ be an observation table. 
\begin{enumerate}
\item  $T$ is said to be \emph{$d$-closed} if for all $r'\in \R\Sigma$ with $\ce(r')\leq d$ there exists $r\in\R$ such that $row(r) = row(r^\prime)$. 
\item  $T$ is said to be \emph{$d$-consistent} if for all $r,s\in\R$, such that $\ce(r)=\ce(s) \leq d$ and $row(r)=row(s)$ implies that for all $\sigma\in \Sigma$, $row(r\sigma)=row(s\sigma)$. 
\end{enumerate}
\end{definition}

As described in Procedure~\ref{alg:activeLearn} (see also~\cite{MathewPS25}), the observation table is iteratively expanded until it satisfies the conditions of $d$-closure and $d$-consistency. The set of words $w \in (\R \cup \R \Sigma)\C$ is partitioned into $\splus$ and $\sminus$, based on whether each word is accepted or rejected by the teacher. The goal is to construct a \droca\ that solves the corresponding passive learning task: separating $\splus$ from $\sminus$ while being counter-synchronous with the teacher's \droca\ for all prefixes of words in $\splus \cup \sminus$. This task is delegated to the passive learning algorithm \opni, which returns a \droca\ consistent with the observed data. 
A \droca and an observation table corresponding to it are given in \Cref{ex:droca} and \Cref{obtable}, respectively. \begin{figure}[h!]
\begin{minipage}[b]{0.48\textwidth}
\centering
\scalebox{1}{
\begin{tikzpicture}[shorten >=1pt,node distance=3cm,on grid,auto]
\tikzset{every path/.style={line width=.4mm}}
   \tikzset{initial text={}}
   \node[state, initial] (q0) at (0,0) {$q_0$};
   \node[state]          (q1) at (3,0) {$q_1$};
   \node[state, accepting] (q2) at (3,-2) {$q_2$};

   \path[->]
   (q0) edge[loop above] node {$a_{=0}, +1;\ a_{>0}, +1$} (q0)
        edge[left] node[xshift=-.2cm] {$b_{=0}, 0$} (q2)
        edge[above] node {$b_{>0}, -1$} (q1)
   (q1) edge[loop above] node {$b_{>0}, -1$} (q1)
        edge[right] node {$b_{=0}, 0$} (q2)
   (q2) edge[loop below] node {$b_{=0}, 0$} (q2);
\end{tikzpicture}
}
\caption{A \textsc{droca} recognising the language $\{a^n b^m \mid m > n\}$. Transitions not shown go to a non-final sink state with counter-action $0$.}
\label{ex:droca}
\end{minipage}
\hfill
\begin{minipage}[b]{0.48\textwidth}
\centering
\scalebox{1}{
\begin{tabular}{l|c|c|cc|}
                     &                   & $\mathsf{ce}$ & \multicolumn{2}{c|}{$\epsilon$}       \\ \cline{4-5} 
                     &                   &               & \multicolumn{1}{c|}{Memb} & $\mathsf{Act}$ \\ \Xhline{1pt} 
\multirow{4}{*}{\rotatebox{90}{$\mathsf{R}$}}   
                     & $\epsilon$        & 0             & \multicolumn{1}{c|}{0}   & $(0,+1,0)$     \\ \cline{2-5} 
                     & a                 & 1             & \multicolumn{1}{c|}{0}   & $(1,+1,-1)$    \\ \cline{2-5} 
                     & b                 & 0             & \multicolumn{1}{c|}{1}   & $(0,0,0)$      \\ 
                     \cline{2-5} 
                     & ab                & 0             & \multicolumn{1}{c|}{0}   & $(0,0,0)$      \\ 
                     \Xhline{1pt}
\multirow{5}{*}{\rotatebox{90}{$\mathsf{R}\Sigma$}}                      
                     & aa                & 2             & \multicolumn{1}{c|}{0}   & $(1,+1,-1)$    \\ \cline{2-5} 
                     & ba                & 0             & \multicolumn{1}{c|}{0}   & $(0,0,0)$      \\ \cline{2-5} 
                     & bb                & 0             & \multicolumn{1}{c|}{1}   & $(0,0,0)$      \\
                     \cline{2-5} 
                     & aba                & 0             & \multicolumn{1}{c|}{0}   & $(0,0,0)$      \\
                     \cline{2-5} 
                     & abb                & 0             & \multicolumn{1}{c|}{1}   & $(0,0,0)$      \\
                     \cline{2-5} 
                     
\hline
\end{tabular}
}
\caption{Observation table for the \textsc{droca} in \Cref{ex:droca}. The table is $1$-closed and $1$-consistent. Here, $\mathsf{R} = \{\epsilon, a, b\}$ and $\mathsf{C} = \{\epsilon\}$.}
\label{obtable}
\end{minipage}
\end{figure}

\begin{figure}[h!]
\begin{minipage}[b]{0.45\textwidth}
\centering
\scalebox{1}{
\begin{tikzpicture}[shorten >=1pt,node distance=3cm,on grid,auto]
\tikzset{every path/.style={line width=.4mm}}
   \tikzset{initial text={}}
   \node[state, initial] (q0) at (0,0) {$q_0$};
   \node[state, accepting] (q1) at (3,0) {$q_1$};

   \path[->]
   (q0) edge[loop above] node {$a_{=0}; a_{>0}$} ()
        edge[above] node {$b_{>0}$} (q1)
   (q1) edge[loop above] node {$b_{>0}$} ();
\end{tikzpicture}
}
\caption{A \textsc{voca} over the alphabet $\{a,b\}$, where $\Sigma_{call}=\{a\}$, $\Sigma_{ret}=\{b\}$ and $\Sigma_{int}=\emptyset$. The \textsc{voca} recognises the language $\{a^n b^m \mid m \leq n\}$. The transitions not shown go to a non-final sink state.}
\label{ex:voca}
\end{minipage}
\hfill
\begin{minipage}[b]{0.45\textwidth}
\centering
\scalebox{1}{
\begin{tabular}{l|c|c|c|}
         &                  & $\mathsf{ce}$ & $\epsilon$ \\ \cline{4-4}
         &                  &               & Memb       \\ \Xhline{1pt}
\multirow{3}{*}{\rotatebox{90}{$\mathsf{R}$}}   
         & $\epsilon$       & 0             & 0          \\ \cline{2-4} 
         & a                & 1             & 0          \\ \cline{2-4} 
         & ab               & 0             & 1          \\ \Xhline{1pt}
\multirow{4}{*}{\rotatebox{90}{$\mathsf{R}\Sigma$}}                      
         & b                & \texttt{x}    & \texttt{x} \\ \cline{2-4} 
         & aa               & 2             & 0          \\ \cline{2-4} 
         & aba              & 1             & 0          \\ \cline{2-4} 
         & abb              & \texttt{x}    & \texttt{x} \\
\hline
\end{tabular}
}
\caption{Observation table for the \textsc{voca} in \Cref{ex:voca}. The table is both $1$-closed and $1$-consistent. Here, $\mathsf{R}=\{\epsilon, a, ab\}$ and $\mathsf{C}=\{\epsilon\}$.}
\label{obtableVoca}
\end{minipage}
\end{figure}

The learner then queries the teacher to verify whether the hypothesis \droca\ accepts the same language as the target. If not, the teacher provides a counterexample, which is used to update the observation table. This process repeats until the learner synthesises a \droca\ that is equivalent to the teacher’s.

It is instructive to contrast our approach with the \minOCA algorithm~\cite{MathewPS25}. While \minOCA uses a SAT solver to solve the passive learning problem and guarantees a minimal separating \droca, the use of \opni does not provide any bound on the size of the returned automaton. As a result, \opni may produce increasingly large automata, and the overall learning procedure \opniL may, in principle, fail to terminate. On the other hand, reliance on SAT solvers limits the scalability of \minOCA. In the next section, we show that our approach, based on \opni, significantly improves scalability compared to \minOCA.

\renewcommand{\algorithmcfname}{Procedure}
\begin{algorithm}[t]
\caption{\opniL for \droca}
\label{alg:activeLearn}
\KwIn{Access to membership (\textsf{MQ}$_\Autom$), counter value (\textsf{CV}$_\Autom$), and  equivalence (\textsf{MSQ}$_\Autom$) queries.}
\KwOut{A \droca $\Butom$ accepting the same language as $\Autom$.}

Initialise $\R \gets \{\varepsilon\}$, $\C \gets \{\varepsilon\}$, $d \gets 0$.\\
Initialise the observation table $T=(\R,\C,\Memb,\ce\upharpoonright_{\R\cup\R\Sigma}, \Act)$.

\Repeat{teacher replies yes to an equivalence query}{
  \While{$T$ is not $d$‐closed or not $d$‐consistent}{
    \If{$T$ is not $d$‐closed}{
     Find $r\in \R$, $a\in\Sigma$ such that $\ce(r a)\leq d$, $row(r a) \neq row(r^\prime)$ for all $r^\prime \in \R$.\\ 
	Add $r a$ to $\R$.	\label{dclosed}
    }
    \If{$T$ is not $d$‐consistent}{
    Find $r,s\in \R$, $a\in\Sigma$, $c\in \C$ such that $\ce(r)=\ce(s) \leq d$, $row(r)= row(s)$, and   
    {($\Memb(r a  c) \neq \Memb(s a  c)$ or $\Actions(r  a  c) \neq \Actions(s  a  c)$).}\\
    Add $a  c$ to $\C$. \label{dconsistent}
    }
    Extend $\Memb$ and $\Act$ to $(\R\cup\R\Sigma)\C$, using membership and counter value queries.
  }
  $\splus = \{w \in (\R\cup\R\Sigma)\C \mid \Memb(w) = 1\}$.\\
  $\sminus = \{w \in (\R\cup\R\Sigma)\C \mid \Memb(w) = 0\}$.\\
  $\Butom \gets \OPNI(\splus \cup \sminus, \ce\upharpoonright_{(\R\cup\R\Sigma)\C}; \Sigma)$.\\
  Ask equivalence query $\textsf{MSQ}_{\Autom}(\Butom)$.

  \If{teacher gives counter‐example $z$ }{
    Add all prefixes of $z$ to $\R$.\\
    Update $d \gets \max\{\max(d, \ce_{\Autom}(z'))\mid z'$ is a prefix of $z\}$.
  }
}
\Return{$\Butom$}.
\end{algorithm}

\paragraph*{Learning of VOCA}
We extend the $\opniL$ method for learning \voca as well. In this setting, the counter-actions are determined by the input alphabet and can therefore be inferred directly from the word. Consequently, the function $\Act$ need not be included in the observation table. This simplifies the learning process, as the sets $\splus$ and $\sminus$ constructed by $\opniL$ are considerably smaller.
Furthermore, there are words that do not have a valid run and are denoted by $\dc$ in the observation table. The notion of $d$-closed and $d$-consistent has to be modified such that a $\dc$-marked cell need not be compared with a cell in the same column. Moreover, the equivalence check for \voca is more efficient than for \droca. A \voca and an observation table corresponding to it are given in \Cref{ex:voca} and \Cref{obtableVoca} respectively (see \Cref{app:Observation}).

\section{Experimental evaluation}
\label{sec:experiments}

{Bruy{\`{e}}re et al.~\cite{bps}, were the first to provide a tool for learning \droca. They employ counter value queries in addition to standard membership and equivalence queries. While effective in principle, this approach does not scale well as the size of the \droca being learnt increases beyond $7$ states.

To address these limitations, Mathew et al. \cite{MathewPS25} introduced \minOCA, a learning tool capable of inferring \droca with at most $15$ states using a SAT-based approach. Although \minOCA improves upon earlier implementations, its reliance on SAT solvers leads to a significant increase in computational overhead as the size of the automaton grows. In practice, this results in a bottleneck that makes learning larger \minOCA infeasible.

In this work, we present a new approach— \opniL —which replaces the SAT-based inference in \minOCA with \opni (see \Cref{sec:opni}). This scales effectively with input size and demonstrates the ability to learn larger \droca, significantly outperforming existing tools.}

We implemented \opniL in Python and tested it on randomly generated \droca. We also implemented a faster learning algorithm for the special case of \voca. The implementation details and the results obtained are discussed in this section.

\subparagraph*{Equivalence query.}  The equivalence of \droca is known to be in polynomial time~\cite{droca}. However, as pointed out in \cite{MathewPS25, bps}, the polynomials involved are not suitable for practical applications. Mathew et al.~\cite{MathewPS25} give a practical algorithm for checking the equivalence of \voca and counter-synchronous \droca that returns the minimal counter-example. We use their ideas in checking the equivalence of the learnt automaton in our implementation.

\subparagraph{Generation of random benchmarks.}
We generated two datasets for evaluating the performance of the proposed method.
\begin{itemize}
\item \dsOne: Random \droca to compare the performance of \opniL with \minOCA.
\item \dsTwo: Random \voca to evaluate the performance of \opniL for learning \voca and a similar adaptation of the \minOCA algorithm.
\end{itemize}

\textit{1. Generating random \DROCA (\dsOne).}
{In~\cite{MathewPS25}, experiments were carried out on randomly generated \drocas.
We follow the same procedure to compare our experimental results with those of~\cite{MathewPS25}.}

{The procedure is explained here for the sake of completeness. 
Let $n\in\N$ be the number of states of the \droca to be generated. 
First, we initialise the set of states $Q=\{q_1,q_2,\ldots, q_n\}$. For all $q\in Q$, we add $q$ to the set of final states $F$ with probability 0.5. 
If $Q=F$ or $F=\emptyset$ after this step, then we restart the procedure. 
Otherwise, for all $q\in Q$ and $a\in\Sigma$, we assign $\delta_0(q,a)=(p,c)$ (resp. $\delta_1 (q,a)=(p,c)$), with $p$ a random state in $Q$ and $c$ a random counter operation in $\{0,+1\}$ (resp. $\{0,+1,-1\}$). 
The constructed \droca { is} $\Autom=(Q, \Sigma, \{q_1\}, \delta_0,\delta_1, F)$. 
If the number of reachable states of $\Autom$ from the initial configuration is not $n$, then we discard $\Autom$ and restart the whole procedure. 
Otherwise, we output $\Autom$.}

We generated random \droca with number of states ranging from $2$ to $30$ and alphabet size ranging from $2$ to $5$.
A total of $100$ \drocas were generated for {every pair of alphabet and number of states.}
We reused the dataset used in \cite{minOCA, MathewPS25} for \droca up to $15$ states. 

\textit{2. Generating random \VOCA (\dsTwo).}
For the case of \voca, we use a similar procedure for generating random \voca. 
We generate the states in the same way as above.
Following this, for all $a\in\Sigma$, we {randomly} puts it in one of the sets $\Sigma_{call}, \Sigma_{ret}$, or $\Sigma_{init}$. 
The elements in 
$\Sigma_{call}, \Sigma_{ret}$ or $\Sigma_{init}$, will respectively have $+1,-1$ and $0$ as their corresponding counter action.
If there does not exist at least one symbol in $\Sigma_{call}$ and one in $\Sigma_{ret}$, then we restart the whole procedure, since the language recognised will be regular. 
The transitions are also generated as in the case of \droca.
Note that the counter action $c$ in a transition depends on whether the symbol $a$ is in $\Sigma_{call}, \Sigma_{ret}$, or $\Sigma_{init}$.
The constructed \voca { is} $\Autom=(Q, \Sigma_{call}\cup \Sigma_{ret}\cup \Sigma_{init}, \{q_1\}, \delta_0,\delta_1, F)$.
If the number of reachable states of \Autom from the initial configuration is not $n$, then we discard $\Autom$ and restart the whole procedure. Otherwise, we output $\Autom$.

We generated random \voca with number of states ranging from $10$ to $100$ and input alphabet size ranging from $2$ to $5$. 
A total of $10$ \voca were generated for {every pair of input alphabet and number of states.} {The number of randomly generated \voca was limited to 10 per configuration due to the increasing difficulty in constructing a valid automaton where all states are reachable and satisfy the remaining conditions. Moreover, our primary objective is to evaluate how well the proposed approach scales with input size. For this purpose, the selected sample size is sufficient to observe meaningful trends in performance without incurring excessive computational overhead. }

\subparagraph*{Experimental results.}
All the experiments were performed on an Apple M1 chip with 8GB of RAM.
We implemented the proposed method (\opniL) in Python for \droca and then for the special case of \voca. 
In \Cref{successF} (resp. \Cref{successV}), the total number of languages is out of 400 (resp. 40) for each input size because, for every number of states, we generated 400 (resp. 40) random \droca (resp. \voca) for input alphabet sizes 2, 3, 4, and 5; to keep the visualization simple, we omitted the z-axis for alphabet size and instead aggregated the number of successfully learnt languages across all alphabet sizes for each input size—this approach is used consistently across all graphs.

\begin{figure}[t]
    \centering
        \begin{subfigure}[t]{0.45\textwidth}
        \includegraphics[width=\linewidth]{./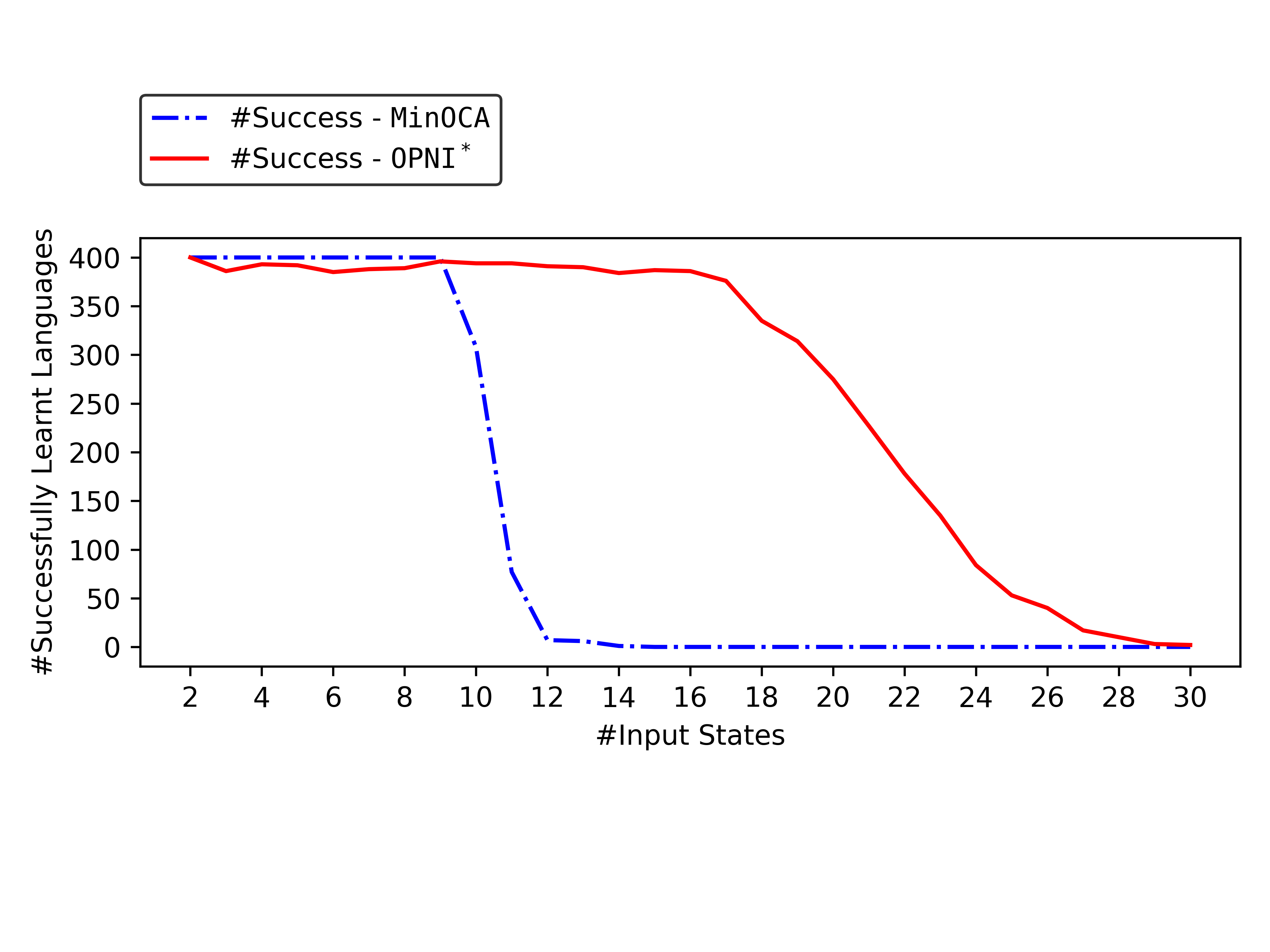}
        \caption{Number of successfully learnt languages by \opniL (Out of $400$) with $5$ minutes timeout.}
        \label{successF}
    \end{subfigure}
    \hfill
    \begin{subfigure}[t]{0.45\textwidth}
        \includegraphics[width=\linewidth]{./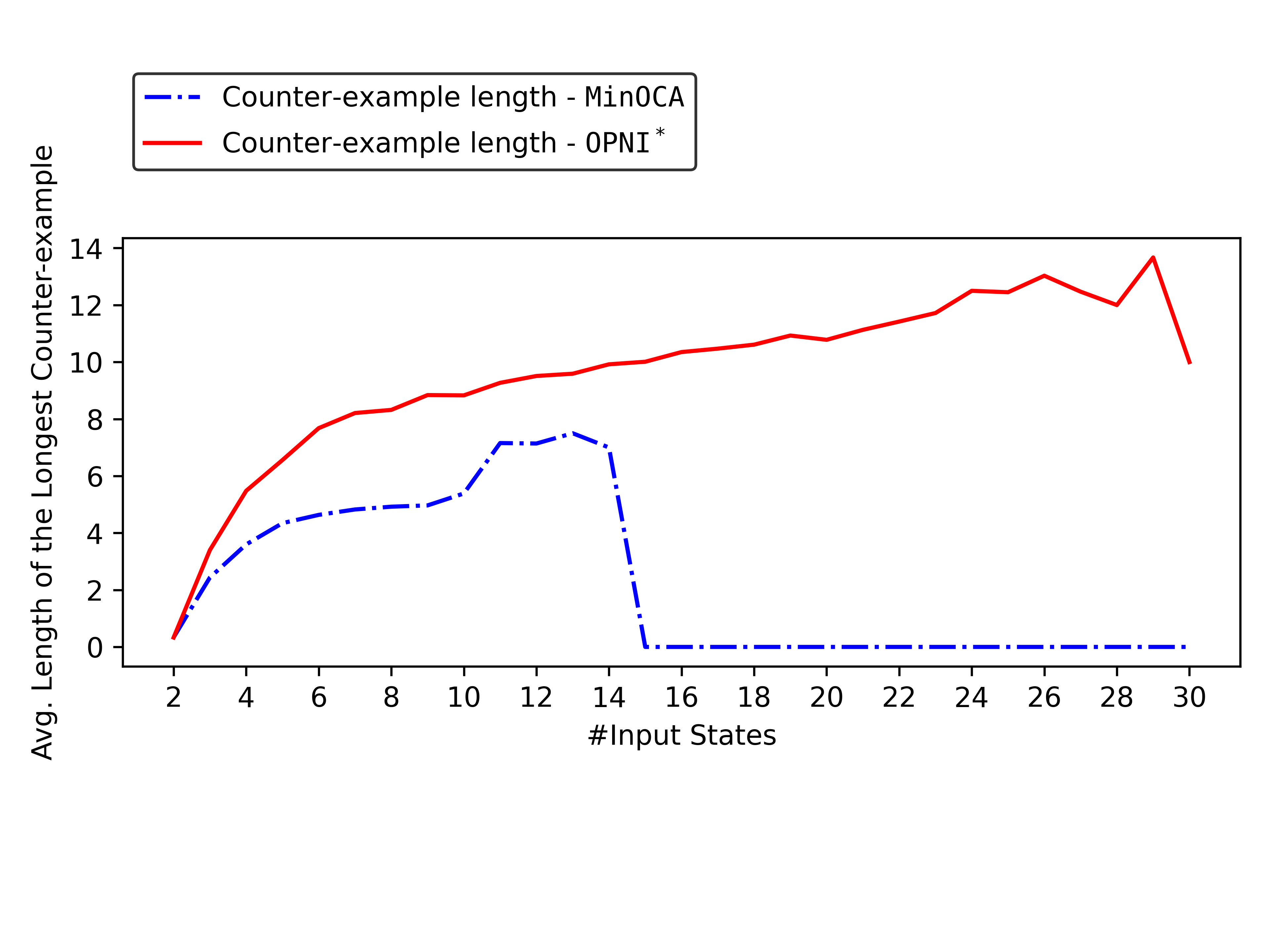}
        \caption{The average length of the longest counter-example.}
        \label{ceFig}
    \end{subfigure}

    \vspace{0.5cm}
    
    \begin{subfigure}[t]{0.45\textwidth}
        \includegraphics[width=\linewidth]{./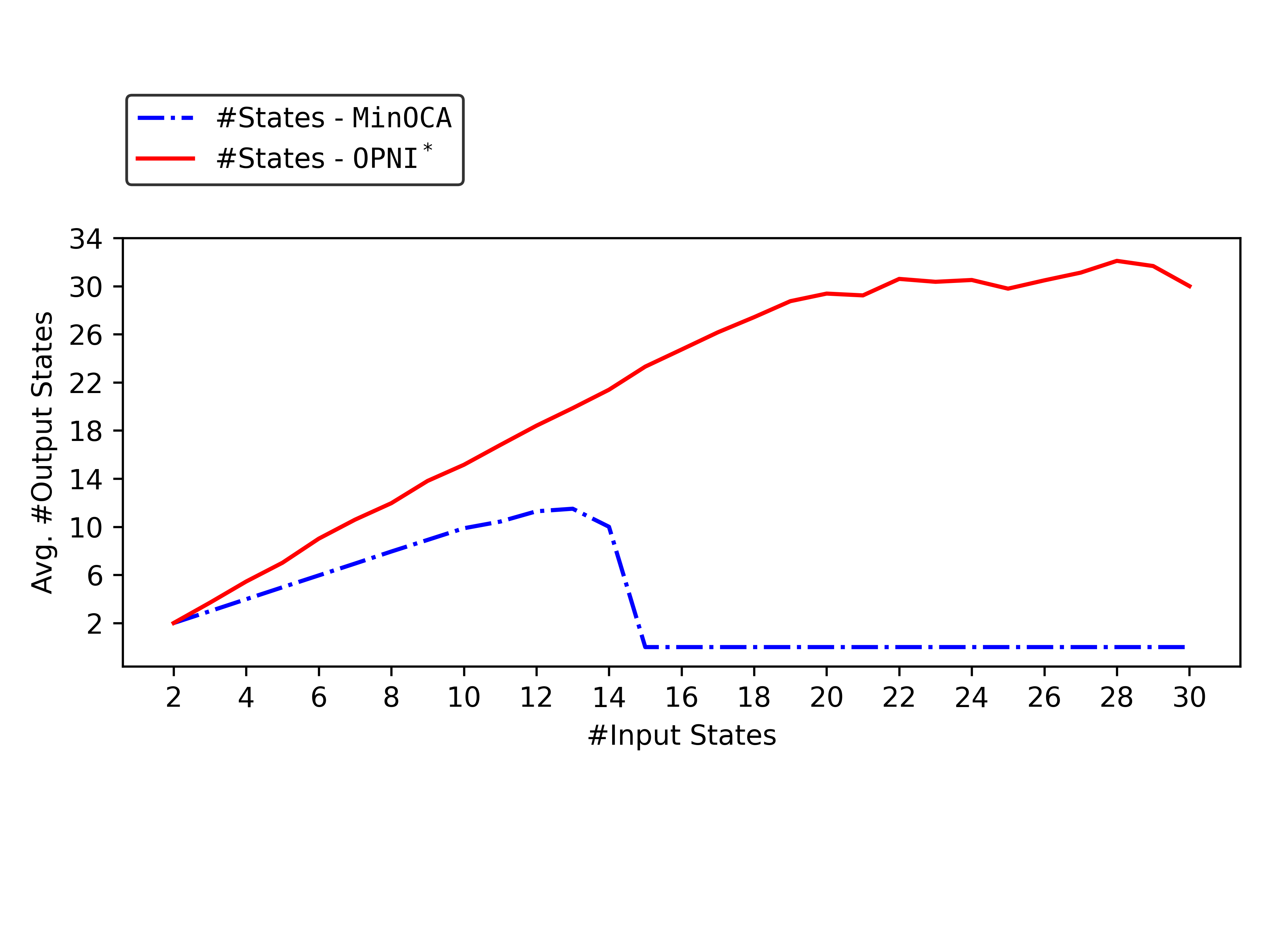}
        \caption{Average number of states in the learnt DROCA.}
        \label{StatesEquiv} 
    \end{subfigure}
    \hfill
    \begin{subfigure}[t]{0.45\textwidth}
        \includegraphics[width=\linewidth]{./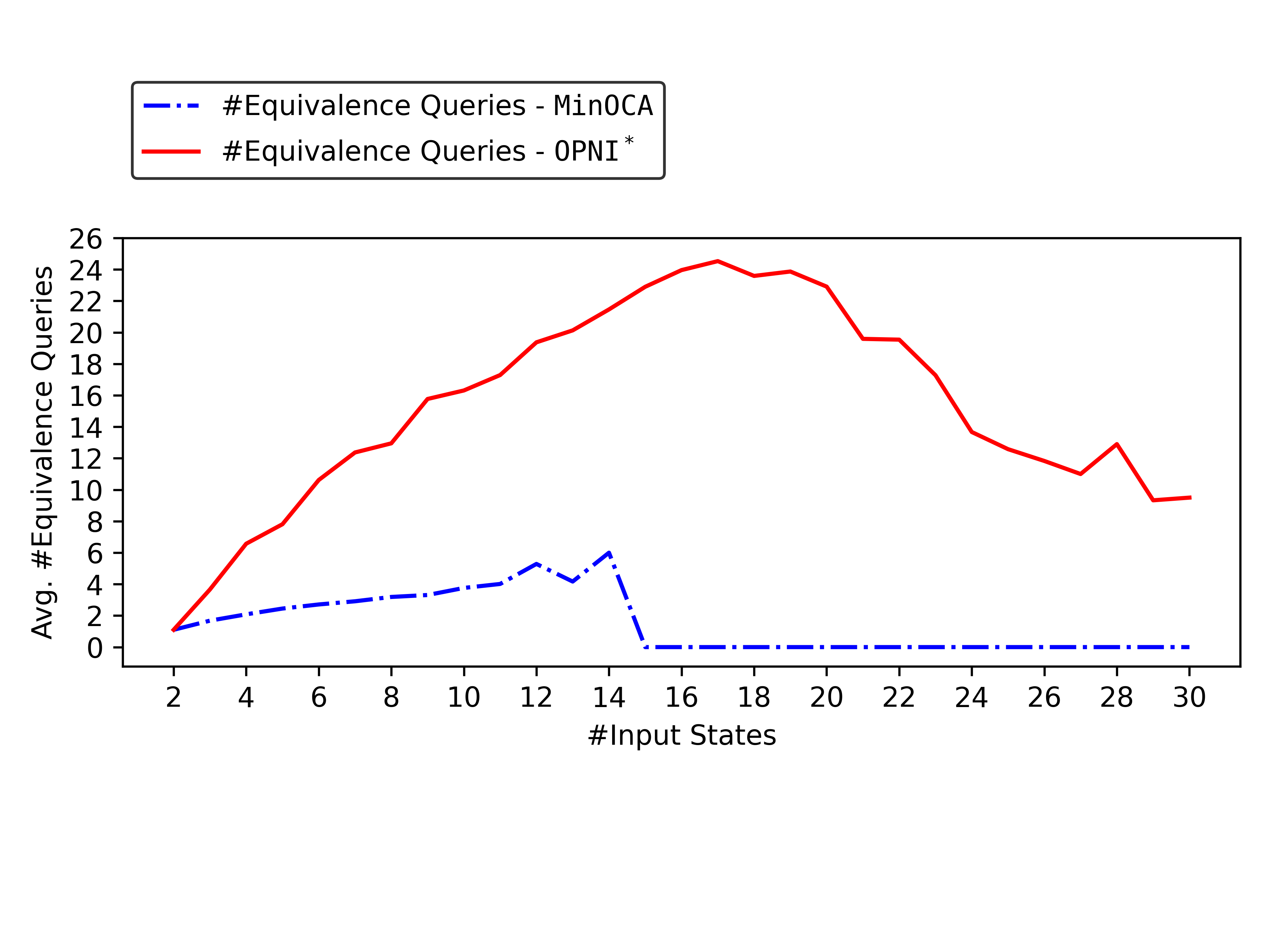}
        \caption{Average number of equivalence queries used.}
        \label{avgEquiv}
    \end{subfigure}
    \caption{Evaluation of \opniL and \minOCA for \droca on \dsOne.}
\end{figure}

\begin{figure}[h!]
    
    \begin{subfigure}[t]{0.45\textwidth}
        \includegraphics[width=\linewidth]{./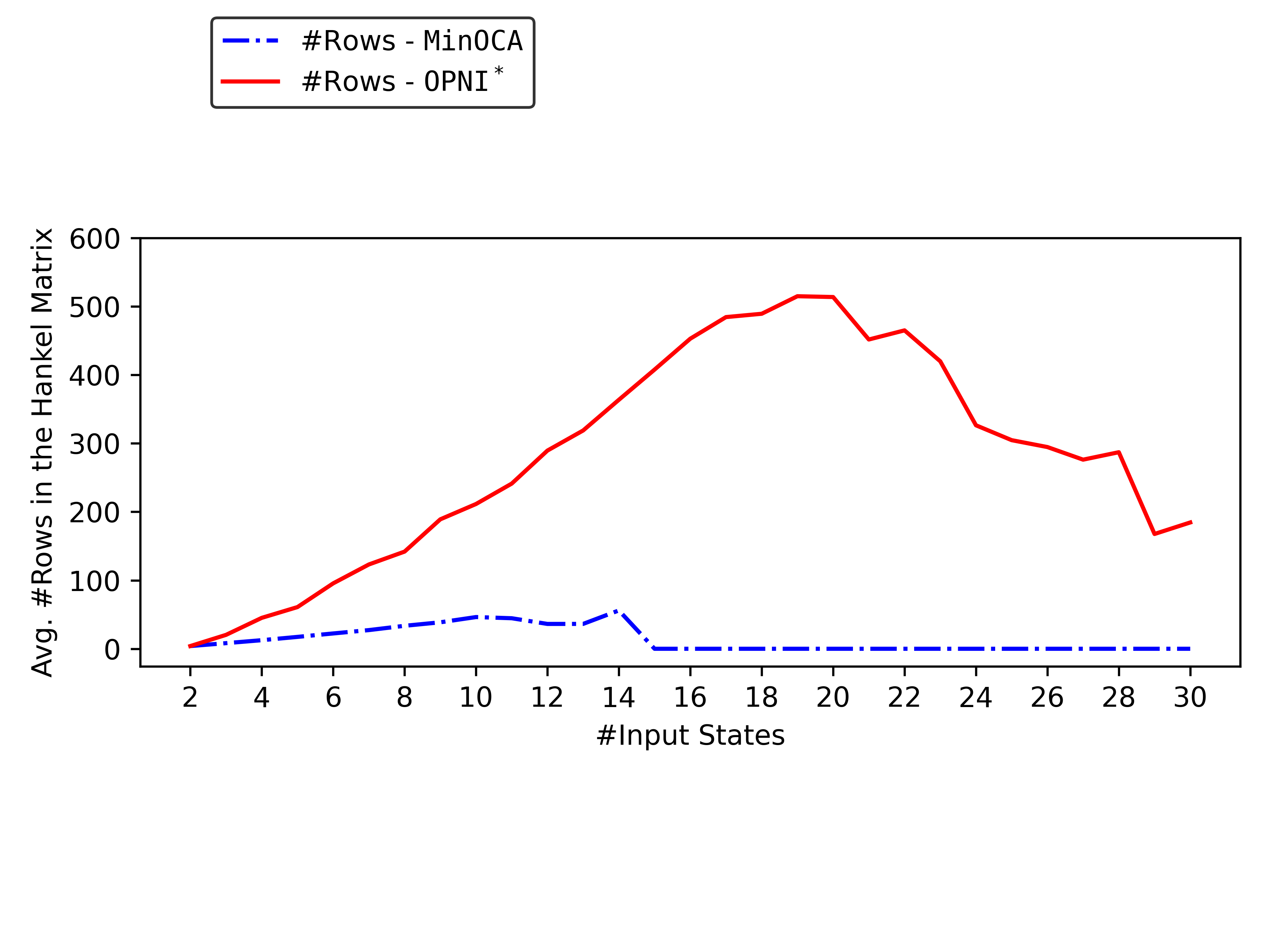}
        \caption{The average number of rows in the observation table.}
        \label{rowsD}
    \end{subfigure}
    \hfill
    \begin{subfigure}[t]{0.45\textwidth}
        \includegraphics[width=\linewidth]{./Figures/DStates.png}
        \caption{The average number of columns in the observation table.}
        \label{columnsD}
    \end{subfigure}

    \caption{Evaluation of \opniL and \minOCA for \droca on \dsOne.}
\end{figure}

\textit{1. Evaluating the performance \opniL and \minOCA on \dsOne: }
We first compare the performance of \opniL for \droca with that of \minOCA{~\cite{MathewPS25}}. 
A timeout of $5$ minutes was allotted for both $\minOCA$ and $\opniL$ for learning each \droca. 
If the procedure times out, we discard that input and process the next one. 
The number of languages successfully learnt by \opniL and \minOCA for different input sizes is depicted in \Cref{successF}. 
The proposed method outperforms \minOCA in terms of the number of successfully learnt languages within the given timeout as the number of states increases. 
The averages presented in the remaining graphs are computed using only those languages that were successfully learnt. Note that the y-values for \minOCA remain zero in all graphs for input sizes with more than $15$ states, as it fails to learn any language within the specified timeout.

\Cref{ceFig} shows the average length of the longest counter-example.
Since $\minOCA$ guarantees learning a minimal counter-synchronous \droca, whereas \opniL provides no such guarantee, the \droca learnt by \opniL typically has more states.
Consequently, the length of the counterexamples also tends to be longer for \opniL. 
 \Cref{StatesEquiv} shows the average number of states in the learnt \droca. 
 \minOCA learns a minimal counter-synchronous \droca equivalent to the input. However, the \droca learnt by \opni is equivalent and counter-synchronous with respect to the input, but not minimal. 
 \Cref{avgEquiv} shows the average number of equivalence queries used for successfully learning the input \droca. 
We also give in \Cref{rowsD} and \Cref{columnsD} the average number of rows and columns in the final observation table. 
 
Note that $\opniL$ is a semi-algorithm; its termination is not guaranteed. 
However, our experimental results on randomly generated benchmarks show that it works well in practice.

\begin{figure}[t]
    \centering
        \begin{subfigure}[t]{0.45\textwidth}
        \includegraphics[width=\linewidth]{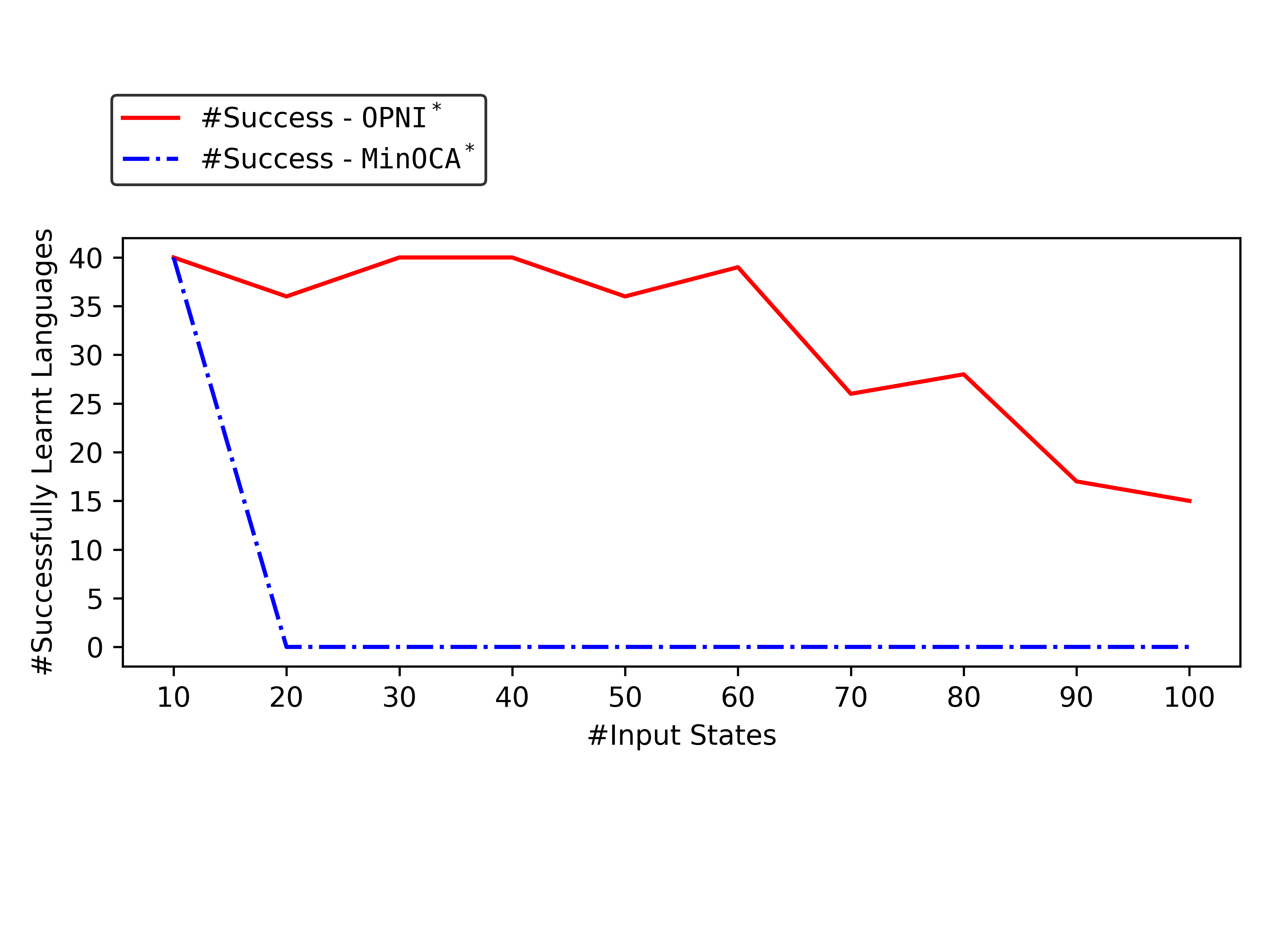}
        \caption{Number of successfully learnt languages (Out of $40$) with a timeout of $20$ minutes.}
        \label{successV}
    \end{subfigure}
    \hfill
    \begin{subfigure}[t]{0.45\textwidth}
        \includegraphics[width=\linewidth]{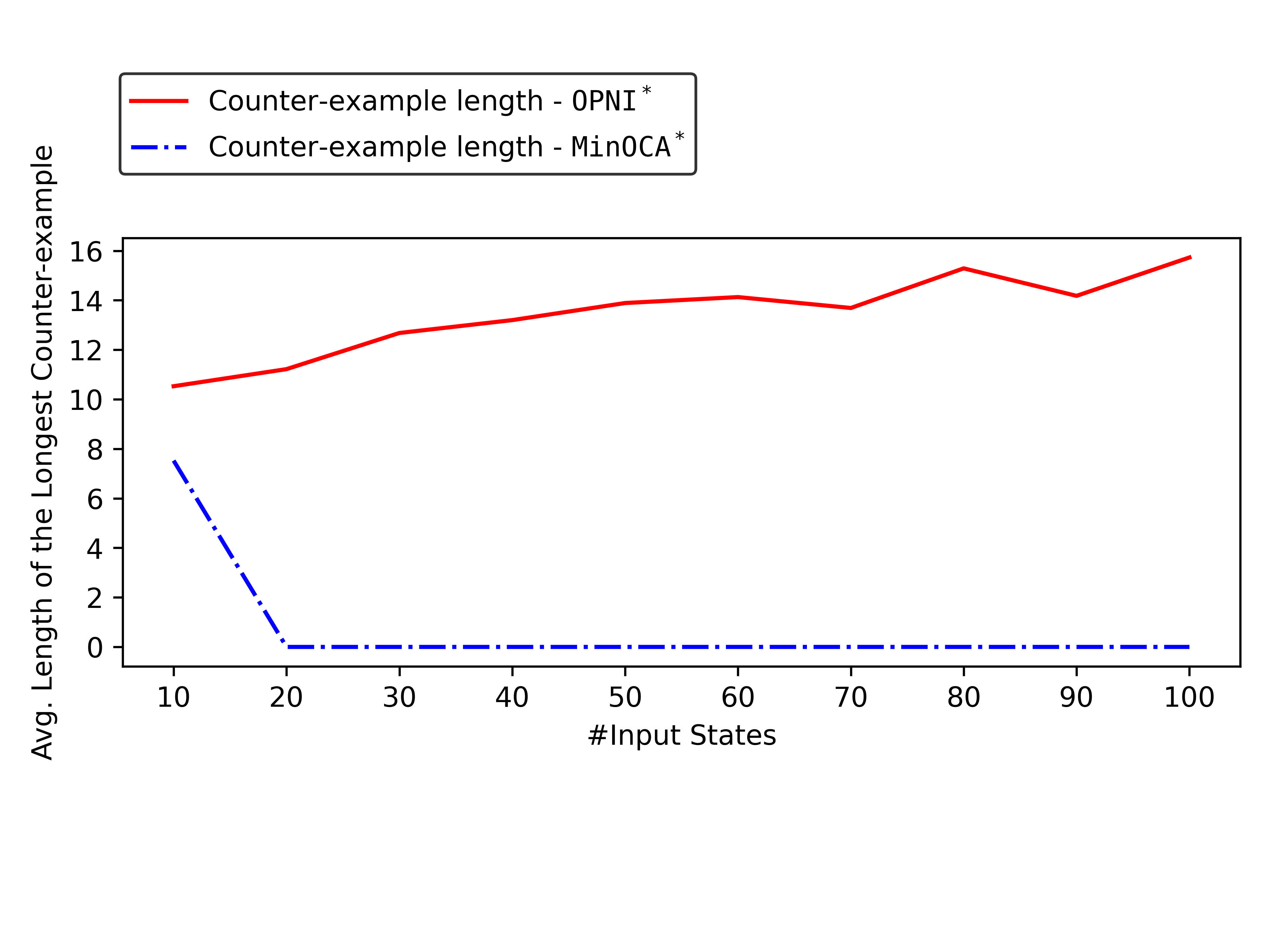}
        \caption{The average length of the longest counter-example.}
        \label{ceFigV}
    \end{subfigure}

    \vspace{0.5cm}
    
    \begin{subfigure}[t]{0.45\textwidth}
        \includegraphics[width=\linewidth]{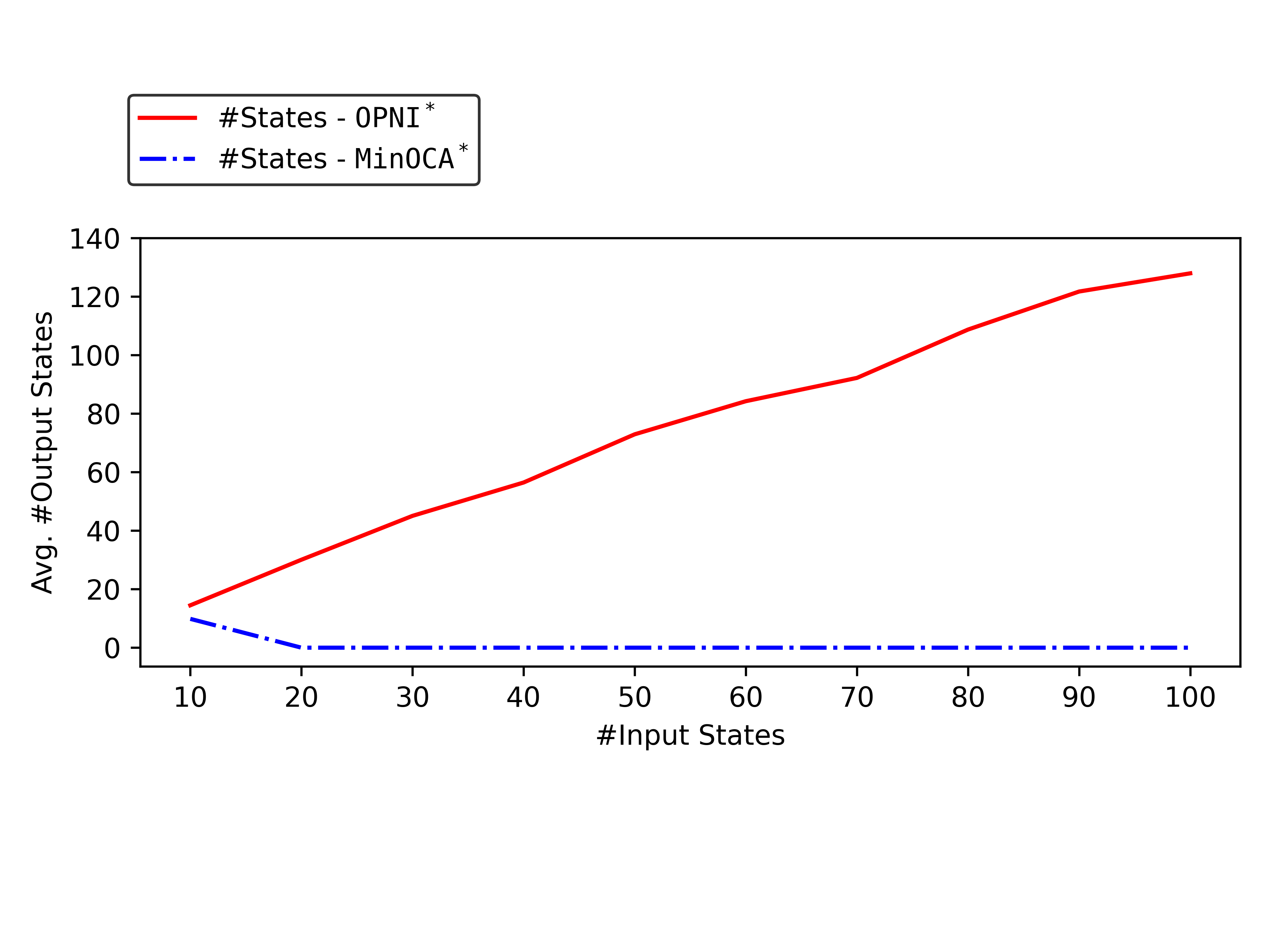}
        \caption{Average number of states in the learnt VOCA.}
        \label{StatesEqV}
    \end{subfigure}
    \hfill
    \begin{subfigure}[t]{0.45\textwidth}
        \includegraphics[width=\linewidth]{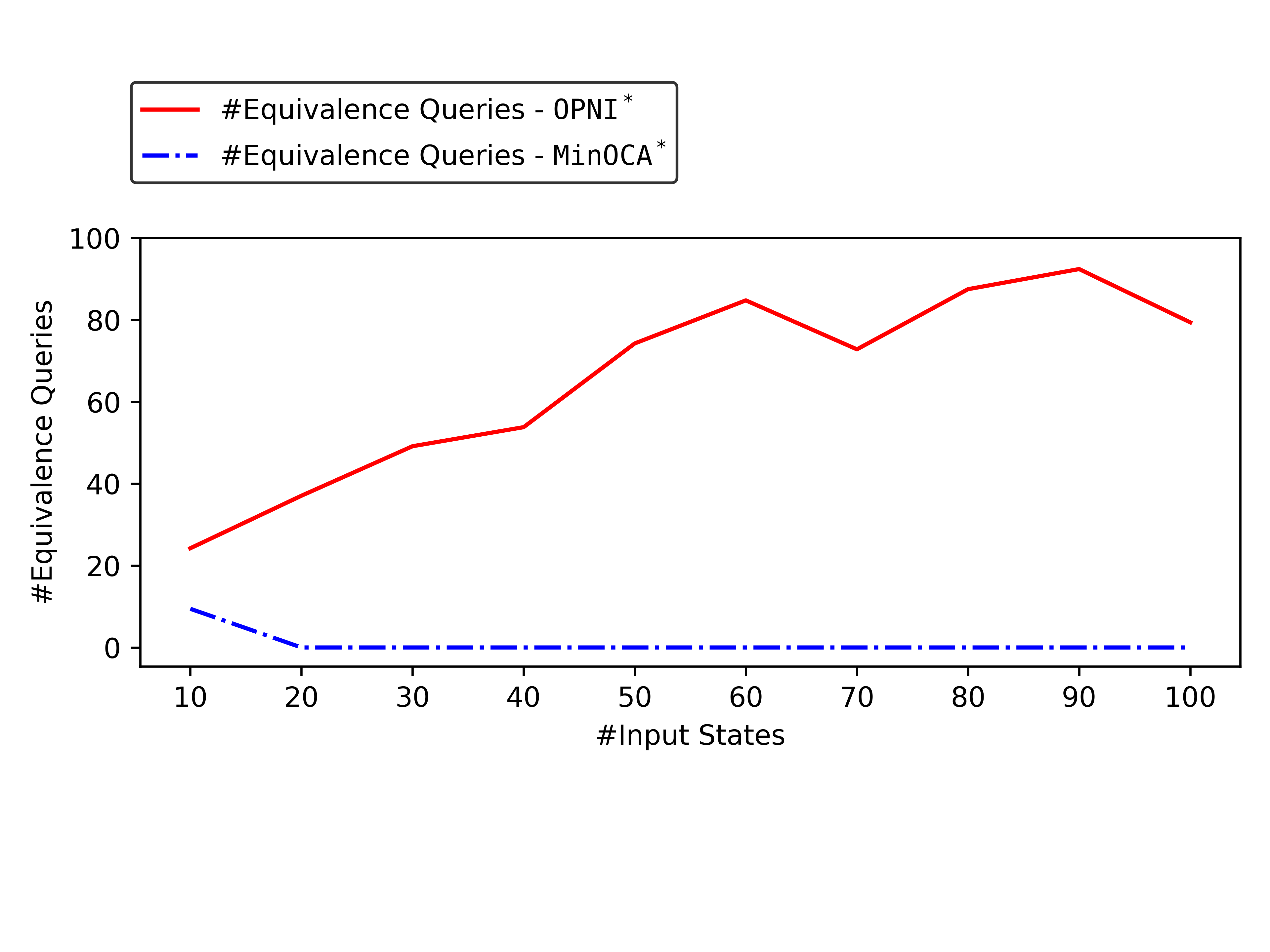}
        \caption{Number of equivalence queries used.}
        \label{avgEqV}
    \end{subfigure}

 \caption{Evaluation of \opniL and \minOCA for \voca on \dsTwo.}
    \label{ExpVoca}
\end{figure}

\begin{figure}[h!]
    \begin{subfigure}[t]{0.45\textwidth}
        \includegraphics[width=\linewidth]{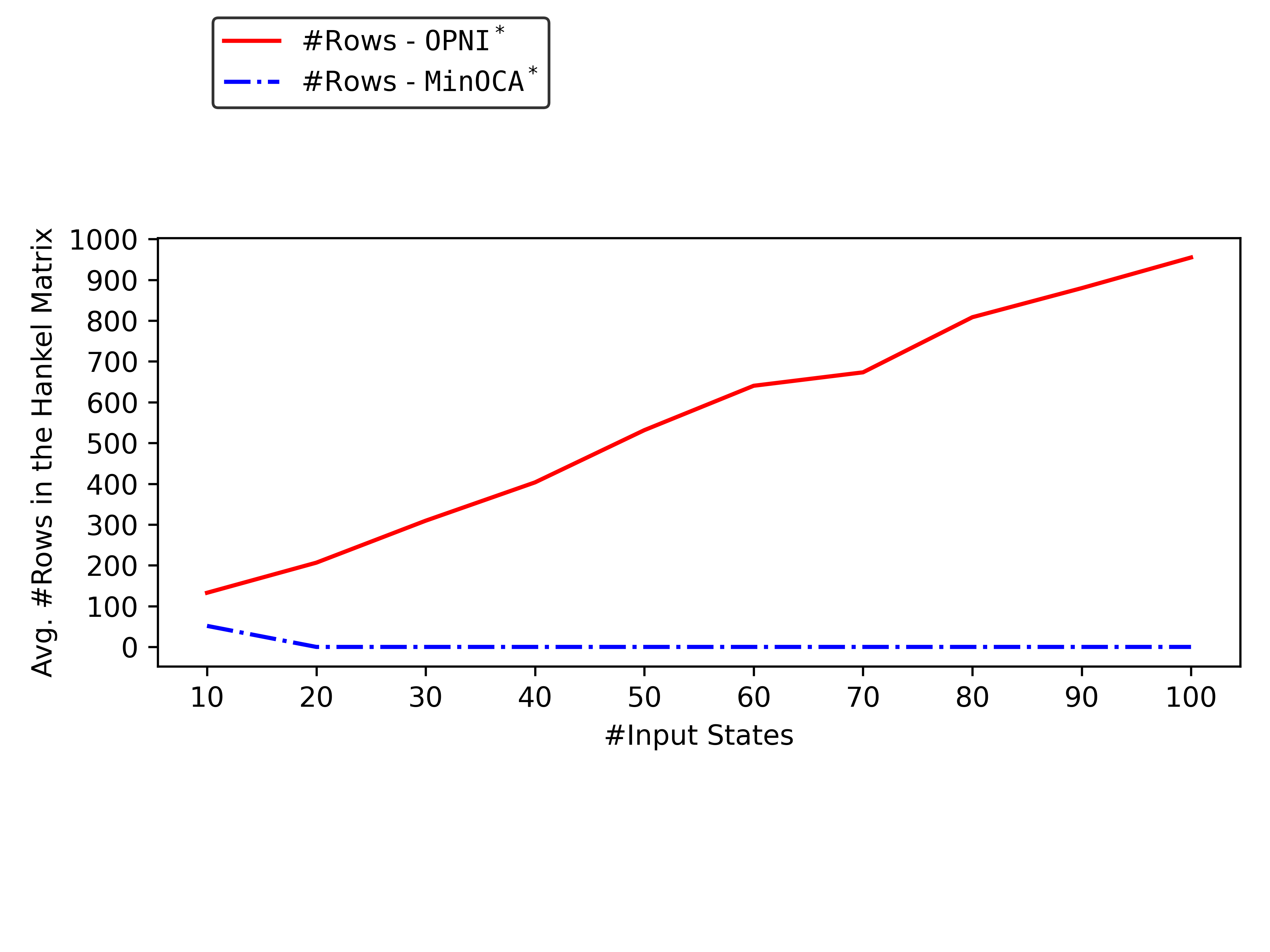}
        \caption{The average number of rows in the observation table.}
        \label{rowsV}
    \end{subfigure}
    \hfill
    \begin{subfigure}[t]{0.45\textwidth}
        \includegraphics[width=\linewidth]{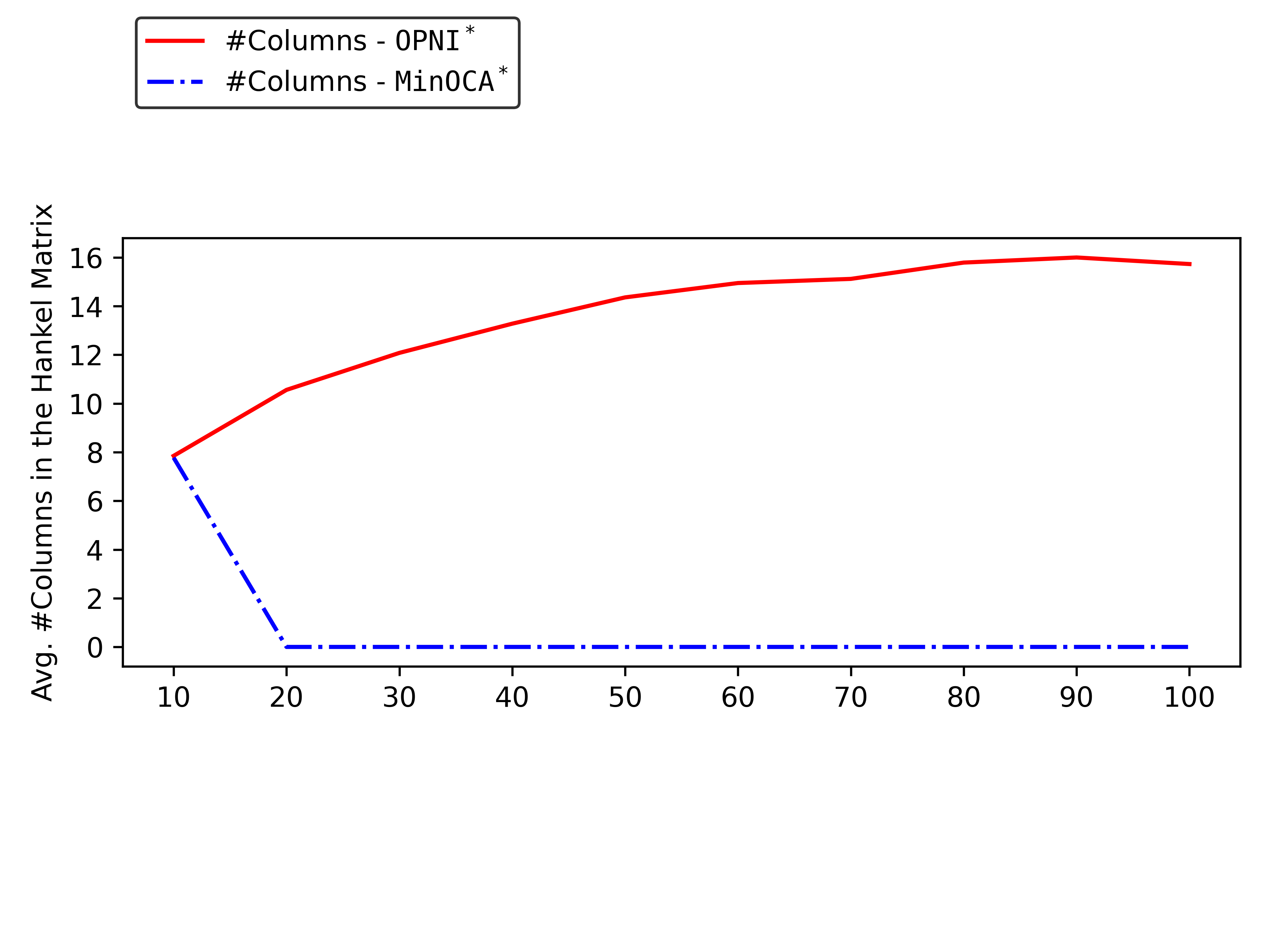}
        \caption{The average number of columns in the observation table.}
        \label{columnsV}
    \end{subfigure}

    \caption{Evaluation of \opniL and \minOCA for \voca on \dsTwo.}
\end{figure}

\textit{2. Evaluating the performance \opniL and \minOCA on \dsTwo: } 
We now compare the performance of \opniL for \voca with that of \minOCA{~\cite{MathewPS25}}. 
We note that \opniL for \voca is faster compared to \opniL for \droca. 
This is mainly due to the following two reasons: 
\begin{enumerate}
    \item faster algorithm for checking equivalence of \voca, and
    \item the input alphabet itself determines the counter actions, and therefore, step \ref{stepTwo} and step \ref{stepThree} of \Cref{alg:opni} can be skipped while passive learning \voca. 
\end{enumerate}

A similar modification of \minOCA\ {for \voca} was also implemented. 
 A timeout of $20$ minutes was allotted for both procedures for learning each \voca. If the procedure times out, we discard that input and process the next one. The number of \voca languages successfully learnt by \opniL for different input sizes is shown in \Cref{successV}. We observe that \opniL\ is able to learn close to $50\%$ of \voca\ of $100$ states. 
However, it was not able to learn any \voca with more than $20$ states. The averages presented in the remaining graphs are computed using only those languages that were successfully learnt.
Note that the y-values for \minOCA remain zero in all graphs for input sizes with more than $20$ states, as it fails to learn any language within the specified timeout.

The results of our experiments are shown in \Cref{ExpVoca}. 
\Cref{ceFigV} shows the average length of the longest counter-example. 
 \Cref{StatesEqV} shows the average number of states in the learnt \voca. 
 \Cref{avgEqV} shows the average number of equivalence queries used for successfully learning the input \voca.
 In this case also, the \voca learnt by \minOCA is minimal, whereas the one learnt by \opniL is not. 
\Cref{rowsV} and \Cref{columnsV} show the average number of rows and columns in the final observation table.

\section{Conclusion}
\label{sec:conclusion}
This work focuses on passive and active learning of deterministic real-time one-counter automata (\droca). Inspired by the classical \RPNI algorithm, we developed a passive learning algorithm, \opni, tailored to \droca. Building on this, we showed how active learning can be guided by a sequence of passive learning tasks, and proposed an active learning method, \opniL, which uses \opni\ as a subroutine. In contrast, the state-of-the-art \minOCA algorithm employs a SAT solver for this step. 

Our experiments demonstrate that \opniL scales significantly better than \minOCA in practice. We also used \opniL to learn visibly one-counter automata (\voca), observing that it can successfully learn automata with up to 100 states. Despite its practical advantages, a limitation of \opniL is that it does not guarantee termination on all inputs, since \opni\ may return increasingly large automata in the absence of minimality guarantees.

A fundamental bottleneck—common to our approach and to prior works such as~\cite{bps, MathewPS25}—is the reliance on counter-value queries. This renders the learning process a grey-box framework, in contrast to Angluin’s black-box $L^*$ algorithm for finite automata. Recent work~\cite{LearningInP} has shown that active learning of \droca\ is possible in polynomial time even without counter-value queries. However, the proposed algorithm is not practical due to the high-degree polynomials involved. An interesting direction for future work is to combine insights from the methods: \opniL, \minOCA~\cite{MathewPS25}, and $\textsf{OL}^*$~\cite{LearningInP} to develop learning algorithms that are both theoretically efficient and practically usable.

Finally, extending these ideas to richer models such as pushdown automata—or their subclasses, such as visibly pushdown automata—offers a direction for future research.

\bibliography{biblio}
\newpage
\end{document}